\newtheorem{remark}{Remark}
\newtheorem{assumption}{Assumption}
\newtheorem{example}{Example}
\crefname{section}{Section}{Sections}
\crefname{subsection}{Section}{Sections}
\crefname{definition}{Definition}{Definitions}
\crefname{proposition}{Proposition}{Propositions}
\crefname{lemma}{Lemma}{Lemmas}
\crefname{theorem}{Theorem}{Theorems}
\crefname{corollary}{Corollary}{Corollaries}
\crefname{example}{Example}{Examples}
\crefname{figure}{Figure}{Figures}
\crefname{assumption}{Assumption}{Assumptions}
\crefname{remark}{Remark}{Remarks}
\crefname{running}{Running Example}{Running Examples}
\crefname{algorithm}{Algorithm}{Algorithms}
\crefname{table}{Table}{Tables}
\newcommand{\yw}[1]{#1}
\newcommand{\pstl}{PSTL\xspace}
\newcommand{\hpstl}{HyperPSTL\xspace}
\tikzset{state/.style={circle, draw, minimum size=0.3cm, initial distance=0.2cm}}
\newcommand{\midcup}{\mathsmaller{\bigcup}}
\newcommand{\phioa}{P\textsuperscript{2}HIOA}
\newcommand{\nat}{\mathbb{N}}
\newcommand{\rat}{\mathbb{Q}}
\newcommand{\real}{\mathbb{R}}
\newcommand{\nnreal}{\real_{\geq 0}}
\newcommand{\dom}{\mathrm{Dom}}
\newcommand{\img}{\mathrm{Image}}
\newcommand{\id}{\mathbf{I}}
\newcommand{\istxx}[1]{\llbracket #1 \rrbracket_{V_\sysState}}
\newcommand{\istx}[2]{\llbracket #1 \rrbracket_{V_{#2}}}
\newcommand{\bm}{\mu_\mathrm{Borel}}
\newcommand{\pr}{\mathbf{Pr}}
\newcommand{\abs}[1]{\vert #1 \vert}
\newcommand{\set}[2]{\big\{ #1 \ \vert \ #2 \big\}}
\newcommand{\ap}{\mathsf{a}} \newcommand{\aps}{\mathsf{AP}} \newcommand{\fv}{\mathsf{fv}}
\newcommand{\sys}{\mathcal{S}} \newcommand{\sysStates}{\mathcal{X}} \newcommand{\sysState}{X} \newcommand{\sysstate}{x} \newcommand{\sysinit}{x^\mathrm{init}} \newcommand{\sysInit}{X^\mathrm{init}} \newcommand{\sysin}{i} \newcommand{\sysIn}{I} \newcommand{\sysIns}{\mathcal{I}} \newcommand{\para}{d}
\newcommand{\Para}{D}
\newcommand{\Paras}{\mathcal{D}}
  \newcommand{\pv}{\pi} \newcommand{\pvs}{\Pi} \newcommand{\sig}{\sigma} \newcommand{\Sig}{\pmb{\sigma}}
\newcommand{\cdf}{\mathbf{F}}
\newcommand{\distr}{\mu}
\newcommand{\Binom}{\mathrm{Binom}}
\newcommand{\Expon}{\mathrm{Exp}}
\newcommand{\Betad}{\mathrm{Beta}}
\newcommand{\pathn}[1]{\mathrm{Path}^{#1}}
 \newcommand{\lb}{\mathtt{L}}
\newcommand{\A}{\mathcal{A}} \newcommand{\ams}{\mathtt{M_\A}} \newcommand{\ass}{\mathtt{X_\A}}   \newcommand{\ains}{\mathtt{U_\A}} \newcommand{\am}{\mathtt{m_\A}} \newcommand{\ax}{\mathtt{x_\A}} \newcommand{\au}{\mathtt{u_\A}}   \newcommand{\ainv}{\mathtt{Inv_\A}} \newcommand{\afl}{\mathtt{F_\A}} \newcommand{\atr}{\mathtt{T_\A}} \newcommand{\ai}{\mathrm{I_\A}} \newcommand{\apred}{\mathtt{Q}}
\newif\ifltl
\renewcommand{\G}{\Box}
\newcommand{\F}{\Diamond}
\renewcommand{\U}{\mathbin{\mathcal{U}}}
\renewcommand{\P}{\mathbin{\mathbb{P}}}
\newcommand{\True}{\ensuremath{\mathtt{True}}}
\newcommand{\False}{\ensuremath{\mathtt{False}}}
\newcommand{\Ps}{ \big( \allowbreak \P^{\pvs_1} \varphi_1, \allowbreak \ldots, \allowbreak \P^{\pvs_n} \varphi_n \big) }
\newcommand{\assert}{\mathcal{A}}
\begin{document}
\title[Statistical Verification of Hyperproperties for Cyber-Physical Systems]{Statistical Verification of Hyperproperties\\for Cyber-Physical Systems}
\titlenote{This article appears as part of the ESWEEK-TECS special issue and was presented at the International Conference on Embedded Software (EMSOFT) 2019.}

\author{Yu Wang}
\orcid{0000-0002-0431-1039}
\affiliation{\institution{Duke University}
  \streetaddress{100 Science Dr, Hudson Hall Rm 220}
  \city{Durham}
  \state{NC}
  \postcode{27708}
}
\email{yu.wang094@duke.edu}

\author{Mojtaba Zarei}
\affiliation{\institution{Duke University}
  \streetaddress{100 Science Dr, Hudson Hall Rm 220}
  \city{Durham}
  \state{NC}
  \postcode{27708}
}
\email{mojtaba.zarei@duke.edu}

\author{Borzoo Bonakdarpour}
\affiliation{\institution{Iowa State University}
  \streetaddress{207 Atanasoff Hall}
  \city{Ames}
  \state{IA}
  \postcode{50011}
}
\email{borzoo@iastate.edu}

\author{Miroslav Pajic}
\affiliation{\institution{Duke University}
  \streetaddress{100 Science Dr, Hudson Hall Rm 130}
  \city{Durham}
  \state{NC}
  \postcode{27708}
}
\email{miroslav.pajic@duke.edu}

\begin{abstract}
Many important properties of cyber-physical systems (CPS) are defined upon
the relationship between multiple executions simultaneously in continuous time.
Examples include probabilistic fairness and sensitivity to modeling errors (i.e., parameters changes) for real-valued
signals.
These requirements can only be specified by {\em hyperproperties}.
In this article, we focus on verifying probabilistic hyperproperties for CPS.
To cover a wide range of modeling formalisms, we first propose a general model of probabilistic uncertain systems (PUSs) that unify commonly studied CPS models such as {\em continuous-time
Markov chains} (CTMCs) and {\em probabilistically parametrized Hybrid I/O
Automata} (\phioa).
To formally specify hyperproperties, we propose a new temporal logic,
{\em hyper probabilistic signal temporal logic} (\hpstl) that serves as a hyper and probabilistic
version of the conventional signal temporal logic (STL).
Considering the complexity of real-world systems that can be captured as PUSs, 
we adopt a {\em statistical
model checking} (SMC) approach for their verification.
We develop a new SMC technique based
on the direct computation of significance levels
of statistical assertions for \hpstl specifications,
which requires no \emph{a priori} knowledge on the indifference margin.
Then, we introduce SMC algorithms for \hpstl specifications on the joint probabilistic
distribution of multiple paths, as well as specifications with nested probabilistic operators
quantifying different paths, which cannot be handled by existing SMC algorithms.
Finally, we show the effectiveness of our SMC algorithms on CPS benchmarks with varying levels of complexity, including the Toyota Powertrain Control~System.
\end{abstract}

\begin{CCSXML}
<ccs2012>
<concept>
<concept_id>10002978.10002986.10002989</concept_id>
<concept_desc>Security and privacy~Formal security models</concept_desc>
<concept_significance>500</concept_significance>
</concept>
<concept>
<concept_id>10003752.10003790.10002990</concept_id>
<concept_desc>Theory of computation~Logic and verification</concept_desc>
<concept_significance>500</concept_significance>
</concept>
<concept>
<concept_id>10010520.10010553</concept_id>
<concept_desc>Computer systems organization~Embedded and cyber-physical systems</concept_desc>
<concept_significance>500</concept_significance>
</concept>
<concept>
<concept_id>10002950.10003648.10003662.10003666</concept_id>
<concept_desc>Mathematics of computing~Hypothesis testing and confidence interval computation</concept_desc>
<concept_significance>300</concept_significance>
</concept>
<concept>
<concept_id>10011007.10011074.10011099</concept_id>
<concept_desc>Software and its engineering~Software verification and validation</concept_desc>
<concept_significance>300</concept_significance>
</concept>
</ccs2012>
\end{CCSXML}

\ccsdesc[500]{Computer systems organization~Embedded and cyber-physical systems}
\ccsdesc[500]{Theory of computation~Logic and verification}
\ccsdesc[500]{Security and privacy~Formal security models}
\ccsdesc[300]{Mathematics of computing~Hypothesis testing and confidence interval computation}
\ccsdesc[300]{Software and its engineering~Software verification and validation}

\keywords{Cyber-physical systems, hyperproperties, statistical model checking, embedded control software.}

\maketitle

\section{Introduction} \label{sec:intro}

Ensuring safety of controllers in embedded and cyber-physical systems (CPS) 
using closed-loop system verification is a challenging problem,
due to the inherent uncertainties in system dynamics and the environment.
In systems where the uncertainties can be captured in a probabilistic manner,  two
prominent verification approaches are
{\em exhaustive}~\cite{baier_PrinciplesModelChecking_2008} and
{\em statistical}~\cite{legay_StatisticalModelChecking_2015}.
The exhaustive approach, with full knowledge of a system~model, computes the
satisfying probability of the desired properties arithmetically; 
on the other hand, the statistical approach estimates the probabilities from sampling, and
makes assertions with \yw{a certain significance level (an upper bound of the probability of returning a wrong answer)}.
Accordingly, {\em statistical model checking} (SMC) is more capable of handling
``black-box'', high-dimension or large-scale system models.

Properties of interest for such systems, from `simple' probabilistic safety or reachability, to the ones that capture dynamical responses under complex conditions, are usually formally defined by probabilistic temporal logic specifications.
Conventional probabilistic temporal logics, such as the probabilistic 
computational tree logic (PCTL)~\cite{hansson1994logic}, as well as its extension
PCTL$^*$~\cite{baier_PrinciplesModelChecking_2008},
can only specify probabilistic properties without explicitly quantifying 
over different paths of the system; that is, 
they cannot simultaneously and explicitly quantify over multiple distinctive paths to fully express their inter-relations.
This prevents them from capturing important safety/performance
hyperproperties~\cite{clarkson_Hyperproperties_2008,
abraham_HyperPCTLTemporalLogic_2018} that involve multiple execution paths, such~as \emph{sensitivity} to model errors, \emph{detectability} of system
anomalies, and \emph{fairness} when more than once process/client are
controlled/serviced.

For example, consider embedded controllers such as the Toyota Powertrain
control system benchmark~\cite{jin_PowertrainControlVerification_2014}, for
which both exhaustive (e.g.,~\cite{duggirala2015meeting})
and statistical (e.g.,~\cite{roohi_StatisticalVerificationToyota_2017}) verification techniques have been
introduced.
However, all these techniques are restricted to the use on a dynamical system model obtained
for fixed system parameters.
In general, such parameters are experimentally derived and thus, should be considered as random variables with unknown probability distributions, instead of the fixed values.
In addition, some of the system parameters might change to a degree, due to system `wear-and-tear'.
Hence, it is critical to enable analysis~of~system sensitivity to model 
errors, by providing a formal logic~to~capture such properties, as well as 
methods to verify how system evolution (execution) changes for different 
parameters of the system~model.

Specifying these properties, such as sensitivity to  change of parameters, involves probabilistic
quantification over multiple paths (a path and its deviation), and thus can only
be captured as {\em hyperproperties}~\cite{clarkson_Hyperproperties_2008}.
For example, as illustrated in \cref{fig:sensitivity}, for sensitivity analysis we can check whether the deviation $\pi_2$ of a path $\pi_1$ under probabilistic uncertainty stays close probabilistically, such that there is limited variation in the hitting time $\tau$ to a desired working region.
Although the sensitivity may be expressed as a non-hyperproperty if `expected' hitting times~are~known in advance (as a reference) for any entry into~the~desired operating~region, such information is usually~unavailable for complex~systems.

Consequently, in this work, we first introduce a probabilistic temporal logic
for hyperproperties expressed on real-valued continuous-time signals, referred to as {\em Hyper
Probabilistic Signal Temporal Logic} (\hpstl).
\hpstl can be viewed as 
a hyper extension and generalization of the probabilistic signal/metric-interval temporal logic~\cite{sadigh_SafeControlUncertainty_2016,
wang_VerifyingContinuoustimeStochastic_2016}, 
a probabilistic version of HyperSTL~\cite{nguyen_HyperpropertiesRealvaluedSignals_2017},
\yw{and a continuous-time extension and generalization of 
HyperPCTL~\cite{abraham_HyperPCTLTemporalLogic_2018}.}
\hpstl extends those logics by enabling
(1) reasoning about the probability of paths by adding a probability operator,
and (2) reasoning about multiple paths simultaneously, i.e., hyperproperties
specifying the relationship between different paths by associating path
variables to the atomic propositions.

To allow us to cover a range of modeling formalisms, we introduce
a very general system model -- {\em probabilistic uncertain systems} (PUS) --
and define the semantics of \hpstl on it.
\yw{Generally, they are `black-box' probabilistic dynamical systems with unknown dynamics on a given state space.}
A PUS incorporates nondeterminism as its input and probabilism as its parameters, both of which are time functions of values of general types, including real, integer or categorical/Boolean.
Given the values of the input and the parameters, \yw{we can draw}
a time-dependent sample path \yw{from the PUS}, which can also be of general types.
We note that this general model captures commonly studied models such as
CTMCs and hybrid I/O automata with probabilistic parameters -- referred to \yw{as} probabilistically
parameterized hybrid I/O automata (\phioa).

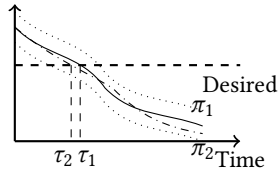
\begin{figure}[!t]
\centering
\begin{tikzpicture}
    \draw [->, thick] (0, 0) -- (0, 1.8);
    \draw [->, thick] (0, 0) -- (3, 0) node[below] () {\small Time};
    \draw [dashed, thick] (0, 1) -- (3, 1) node[below] () {\small Desired};
	\draw [] (0,1.5) to[out=-45,in=125] (1.2,0.7) to[out=-45,in=160] (2.5,0.2) node[above] () {$\pv_1$};
	\draw [yshift=0.2cm, dotted] (0,1.5) to[out=-45,in=125] (1.2,0.7) to[out=-45,in=160] (2.5,0.2);
	\draw [yshift=-0.2cm, dotted] (0,1.5) to[out=-45,in=125] (1.2,0.7) to[out=-45,in=160] (2.5,0.2);
	\draw [dash dot] (0,1.5) to[out=-40,in=150] (1.3,0.7) to[out=-45,in=170] (2.5,0.1) node[below] () {$\pv_2$};

    \draw [dashed] (0.87, 1) -- (0.87, 0) node[below, xshift=0.1cm] () {$\tau_1$};
    \draw [dashed] (0.75, 1) -- (0.75, 0) node[below, xshift=-0.1cm] () {$\tau_2$};
\end{tikzpicture}
\caption{Sensitivity of paths $\pv_1$ and $\pv_2$, drawn in \yw{solid and dashed} lines respectively.\label{fig:sensitivity}}
\end{figure}

\yw{We define the semantics of \hpstl on a set of paths of the PUS through pre-defined labels or predicates on the state spaces.}
We show \yw{by concrete examples} that this allows for the capturing of properties such as (i) anomaly 
detectability, or (ii) sensitivity to model errors due to probabilistic uncertainty of parameters of hybrid I/O automata, and (iii) workload~fairness in queueing networks modeled as continuous-time
Markov chains (CTMC).
\yw{To verify \hpstl specifications on a PUS for a given input, we introduce 
new SMC algorithms using Clopper-Pearson (CP) bounds~\cite{clopper1934use}}.

Unlike most previous SMC methods based on the sequential probability {ratio} test (SPRT)~\cite{legay_StatisticalModelChecking_2015,larsen_StatisticalModelChecking_2016}, this approach requires no \emph{a priori} knowledge on the indifference margin.
\yw{Our SMC algorithm can verify a \hpstl specification to arbitrarily small (non-zero) significance levels, which is also different from~\cite{barbot_StatisticalModelCheckingAutonomous_2017} on using CP bounds to estimate satisfying probability for given~samples.} 
\yw{The conservativeness of the CP significance level ensures that the desired significance level is strictly achieved even in the worst case.
Specifically, we iteratively draw samples from the objective model and compute the significance level derived from the CP confidence interval.
The algorithm stops if the CP significance level is smaller than the desired significance levels.}
Further improvement in sample efficiency is also possible using the upper bounds derived from Wilson scores, Jeffreys interval or Agresti-Coull interval, at the cost of yielding only asymptotic correctness (see~\cite{brown_IntervalEstimationBinomial_2001} for a review of these confidence~intervals).

\yw{To the best of our knowledge, this work is the first to enable} SMC on probabilistic temporal
logic for hyperproperties in continuous time.
Compared to common non-hyper temporal logics, \hpstl allows uniquely for (1)
probability quantification over multiple paths, (2) defining specifications on
their joint probabilistic distribution, such as comparison of probabilities, and
(3) nesting of probabilistic operators that quantify over different sets of
paths.
We address these challenges by deriving SMC methods capable of handling these cases, while providing provable significance levels.
We also show that \hpstl subsumes several existing signal temporal
logics for probabilistic
properties~\cite{roohi_StatisticalVerificationToyota_2017,
wang_VerifyingContinuoustimeStochastic_2016} (see \cref{sec:logic}), and
thus, the SMC algorithms for \hpstl  introduced in this work,
also apply to them.

Finally, we show the effectiveness of our~SMC methods on embedded
case studies with different complexity and modeling formalism.
Specifically, we statistically verified sensitivity of a thermostat and the
Toyota Powertrain System with uncertain parameters, as well as fairness in queueing networks
of different sizes; we generated probabilistic guarantees with high
significance levels. For example, for the Toyota Powertrain and the large queueing networks, where 
exhaustive verification is not possible, we construct upper bounds on the 
$0.95$ and $0.99$ percentiles of their sensitivity and fairness, respectively, 
with a confidence level of $0.99$ using only a few hundred samples.
This achieves, for the first time, statistical verification of probabilistic
hyperproperties on continuous signals, which can be used even for real-world~sized~CPS.

\subsubsection*{Organization} We introduce probabilistic uncertain systems in
\cref{sec:model},
followed by the syntax and semantics of \hpstl~in~\cref{sec:logic}. In
\cref{sec:example}, we show how  \hpstl can be used to capture  desired
properties of CPS.
Our SMC algorithms are presented in \cref{sec:smc} with an emphasis on the
use of CP significance levels and the handling of probabilities involving
multiple paths.
We apply the SMC techniques to three embedded case studies (e.g., Toyota Powertrain System) in \cref{sec:evaluation}, before concluding~in~\cref{sec:conc}.

\subsubsection*{Notation}
We denote the sets of integers, rational, real, and non-negative real numbers by $\nat$, $\rat$, $\real$, and $\nnreal$, respectively.
The domain and image of a function is denoted by $\dom(\cdot)$ and $\img(\cdot)$, respectively.
Let $\nat_\infty = \nat \cup \{\infty\}$ and $\rat_\infty = \rat \cup \{\infty\}$.
For $n \in \nat$, let $[n] = \{1,\ldots,n\}$.
The indicator function is denoted by $\id$.
We denote the (Borel) measure of a measurable set by $\bm (\cdot)$.
The cardinality and the power set of a set are denoted by $\abs{\cdot}$ and $2^{\cdot}$.
For any set $I \subseteq \real^n$, we denote its boundary, interior, and 
closure by $\partial I$, $I^\circ$, and $\bar{I}$, respectively.
The empty set is denoted by $\emptyset$.
With a slight abuse of notation, given a map $V: A \rightarrow B$ and $A' 
\subseteq A$, let $V(A') = \midcup_{a \in A'} V(a)$.

We refer to a function of time $\sig: \nnreal \rightarrow \real^n$ as a \emph{signal}, and denote by $\sig^{(t_1)}$ its $t_1$-time shift defined as $\sig^{(t_1)}(t_2) = \sig(t_1 + t_2)$ for $t_1 \in \nnreal$.
We denote the binomial distribution by $\Binom(n, p)$, the exponential distribution parametrized by rate by $\Expon(\lambda)$, and the beta distributions with shape parameters by $\Betad(\alpha, \beta)$.
A random variable $X$ drawn from probability distribution $\distr$ is denoted by~$X\sim\mu$.

\section{Probabilistic Uncertain Systems}
\label{sec:model}

To allow for defining \hpstl for a large class of commonly used modeling formalisms in a way that facilitates design of SMC techniques,
we introduce a very general system model, which we refer to as 
{\em probabilistic uncertain systems} (PUS) (\cref{fig:probabilistic_system}). 
Such model can be viewed as a ``black-box'' probabilistic dynamical system with an explicit
state space, where the randomness only comes from the time-dependent parameters
drawn from random processes.
The PUS generalizes models such as CTMCs and \phioa, and provides a unified framework for the SMC of probabilistic hyperproperties.

A PUS is a 'black-box' probabilistic dynamical system on a given {\em state space}, 
which we denote by $\sysStates$. 
Its probabilistic uncertainty comes from a set of $n$ \emph{time-dependent} parameters $\Para(t)
= \big( \para_1(t), ..., \para_n(t) \big)$ where $t \in \nnreal$, drawn from some probability
distribution $\distr(\Paras)$ on its domain $\Paras$.
That is, the {\em parameters} of the system are drawn from an $n$-dimensional
random process.
The {\em input} to the system $\sysIn(t) = \big( \sysin_1(t), \dots,
\sysin_m(t) \big) \in \sysIns$ is an $m$-dimensional function of time $t$.
Given the input $\sysIn(t)$, the values of the parameters $\Para(t)$, and an initial 
state $\sysInit = \big( \sysinit_1, \dots, \sysinit_l \big) \in \sysStates$,
the system deterministically generates
a time-dependent {\em path} $\sysState: \nnreal \rightarrow \sysStates$ with $\sysState(t) = \big( \sysstate_1(t), \dots, \sysstate_l(t) \big)$. 
That is, the randomness in system evolution only comes 
from system~parameters.

The values of the system inputs, parameters, and states can be of {mixed}
types: real, discrete or categorical, depending on the system formulation.
Here, we make no assumption on the system dynamics (Markovian, causal, etc); rather, the system should be viewed just as a general
deterministic map from the functions $\sysIn(t)$ and $\Para(t)$ to the function
$\sysState(t)$.
Without knowing the value of the system parameters and given the initial state,
the map from the input $\sysIn(t)$ to the path $\sysState(t)$ is {only
probabilistic, \yw{i.e., the probability distribution for all uncertain variables are given.}

\begin{figure}[!t]
\centering
\begin{tikzpicture}
	\node[text width=2.2cm] (in) {\yw{(Given)} input $\sysIn(t) \in \sysIns$};
  	\node[draw, right of=in, xshift=0.4cm, minimum width=2cm, minimum height=1cm, xshift=1.3cm, text width=1.8cm] (sys) {Probabilistic Uncertain System $\sys$};
  	\node[text width=1.7cm, above of=sys, yshift = 0.2cm] (para) {};
	\node[text width=3.7cm, above of=sys, xshift = 1.92cm] (paraT) {Parameter $\Para(t) \sim \distr(\Paras)$};
  	\node[text width=1.3cm, right of=sys, xshift = 1.3cm] (out) {Path $\sysState(t) \in \sysStates$};
	\path[->] (in) edge (sys.west);
	\path[->] (para) edge (sys.north);
	\path[->] (sys.east) edge (out);
\end{tikzpicture}
\caption{Probabilistic Uncertain System $\sys$.\label{fig:probabilistic_system}}
\end{figure}

Given the finite set of {\em atomic propositions} $\aps$,
the labeling function on the state space
$\lb:
\sysStates \rightarrow 2^\aps$
defines for each state, a set of atomic propositions that hold.
Alternatively, the labeling of each atomic proposition in $\aps$ can be represented by a predicate/Boolean function that indicates the subset of states $\sysStates$ that are labeled by that atomic proposition.
Then, a path of the system induces a signal $\sig(t) = \lb(\sysState(t)): \nnreal
\rightarrow 2^\aps$.
This signal indicates the set of properties that is satisfied by the system
$\sys$ at each time $t \in \nnreal$ during system evolution.
In \cref{sec:logic}, we will define hyper temporal properties on sets of such signals.

A PUS $\sys$ is both probabilistic in the value of its parameters and
nondeterministic in its input.
We note that the distinction between the parameters and the input of the system
is only mathematical, not physical.
In practice, we can model the probabilistic input of a real system as a
parameter of its PUS model, and a nondeterministic parameter of the system as an
input of the PUS.
In this work, we focus on SMC related to the probabilistic nature~of~a~PUS;
 i.e., we consider the cases, where the input of the PUS is given such as
when it is determined by the actions of the controller used to control the
system (PUS).
Before we introduce \hpstl and SMC methods for PUS, in the rest of this
section, we show how our abstract notion of PUS captures
two prominent computing models: \emph{continuous-time Markov chains}
(CTMCs) that mathematically model queuing networks and probabilistic hybrid I/O
automata that model the Toyota Powertrain system {mentioned in}~\cref{sec:intro}.

\subsection{Probabilistically Parameterized Hybrid I/O Automata}
\label{sub:hybrid_automata}

Probabilistically Parameterized Hybrid I/O Automata (\phioa) are extensions of
Hybrid I/O Automata (HIOA)~\cite{lynch2003hybrid,henzinger2000theory,jin_PowertrainControlVerification_2014} with probabilistic system
parameters.
Given the variety of mathematical models for HIOA, we build the one used
in~\cite{jin_PowertrainControlVerification_2014} to describe the dynamics
of hybrid systems with fully observable states.
Specifically, an HIOA is a tuple $\A = (\ams, \ass, \ai, \ains, \allowbreak \ainv, \atr, \apred)$ where
\begin{itemize}
\item $\ams$ is a finite set of {\em modes};

\item $\ass $ is a set of $n$ {\em state variables} of real values, i.e.,
$\ass\subseteq \real^n$;

\item $\ains$ is a set of $m$ {\em input variables}; their valuations can
be of different types, such as $\nat$, $\real$, and Boolean;

\item $\afl: \ams \times \ass \times \ains \rightarrow
\real^{n}$ defines a deterministic {\em flow} capturing system evolution in
each mode, i.e., a differential equation:
$$\frac{\mathrm{d} \ax}{\mathrm{d} t} = \afl\big(\am, 
\ax, \au\big);$$

\item For any mode
$\am \in \ams$, $\ainv(\am) = \dom(\afl(\am, \cdot, \cdot))$ defines the
{\em invariant} of the mode;

\item $\atr: \ams \times \ass \times \ains \rightarrow \ams \times
\ass$ defines deterministic {\em jumps} triggered by $\ass \in \partial \ainv(\am)$,
i.e., when the flow hits the boundary of the invariant.

\item $\ai \in \ams \times \ass$ is the {\em initial} condition;

\item $\apred \subseteq \ains \times \ass \rightarrow \{0,1\}$ is a finite set
of {\em predicates}.

\end{itemize}

The \phioa~extend Hybrid I/O Automata by allowing system parameters, used to capture $\afl, \atr, \ainv$, to be probabilistic instead of fixed. 
This differs from the Probabilistic Hybrid Automata introduced in~\cite{sproston_DecidableModelChecking_2000,zhang_SafetyVerificationProbabilistic_2010}, where the randomness comes only from the probabilistic jumps.
In standard hybrid models,
the deterministic and nonlinear flow, invariant, and jump functions $\afl, \atr, \ainv$, which define
its dynamics, are parameterized by quantities that are estimated from physical
experiments.
For example, in the Toyota Powertrain model, the mass flow rate of intake air is determined by the RPM of the
engine and the pressure of the intake manifold through a polynomial, whose five
parameters are fit from experimental data.
Due to experimental errors, these parameters are \yw{better}
represented as random variables with unknown
probability distributions (e.g.,
Gaussian or uniform
with means and variances inferred from experimental
data) than real numbers as in~\cite{jin_PowertrainControlVerification_2014}.
Thus, we denote the parameters for $\afl$, $\atr$, and $\ainv$ by $\Para = (\para_\afl,
\para_\atr, \para_\ainv)$.
To simplify our presentation, in this work, we assume that these parameters for \phioa~are time-invariant.

Consequently, a \phioa~can be represented by a PUS by treating (i) $\sysIn = \ains$
as the input, (ii) \yw{$\sysState = \ams \times \ass$} as
the state (the input variables $\ains$ are encoded as part of the state), (iii) $\sysInit = \ai$ as the initial state, (iv) $\Para = (\para_\afl, \para_\atr,
\para_\ainv)$ as the parameters, and (v)~the predicate $\apred$ as the labeling
function.
The probabilistic hyperproperties related to \phioa,  as the one discussed in \cref{sec:intro,sec:example}, will be statistically verified on this~PUS.

\subsection{Continuous-Time Markov Chains} \label{sub:ctmc}

Another example of PUS are CTMCs, which are
commonly used to model queuing and task scheduling in embedded computing and communication systems with
uncertainties.
Consider a CTMC~with
\begin{itemize}
	\item the {\em states} $[n]$,
	\item the {\em initial state} $\sysState_0 \in [n]$,
	\item the {\em labeling function} on the states $\lb: [n] \rightarrow
2^\aps$ for a given set of labels $\aps$,
	\item the {\em probability transition rate matrix} $M \in \real^{n
\times n}$, such that $\sum_{j \in [n]} \allowbreak M_{ij} = 0$, where $M_{ij}$
is the transition rate from a state $i$ to a state $j$.
\end{itemize}

The CTMC can be represented by a PUS with (i) the states $\sysStates = [n]$, (ii) the initial state $\sysState_0$, (iii) the empty inputs  (i.e., the PUS has no input), (iv) the labels $\aps$, and (v) the labeling function $\lb: [n] \rightarrow 2^\aps$.
For $i \in \nat$, we draw $\para_{1i}$ and $\para_{2i}$ from $[0,1]$ uniformly and independently.
Using Gillespie algorithm~\cite{gillespie_GeneralMethodNumerically_1976}, 
with the convention $\sum_{k=1}^{0} \cdot = 0$, 
the state $\sysState$ of the PUS representing the CTMC evolves~by:
\[
	\sysState(t) = \sysState_i ~~~~~~~~~~ \textrm{ if }~~~~~~~~ t \in [T_i, 
T_{i+1}),
\]
where the state jumps are determined by $\para_{1i}$ as
\[
	\sysState_{i+1} = j \textrm{ when } \sysState_{i} = l \ \textrm{ if } \ \sum\nolimits_{k=1}^{j-1} M_{l
k} \leq \para_{1i} < \sum\nolimits_{k=1}^{j} M_{l k},
\]
and the time lapses are determined by $\para_{2i}$ using
\[
T_{i+1} = T_i + \cdf^{-1}_{\Expon \big( \sum_{k \in [n], k \neq l} M_{l k} \big) } \big( \para_{2i} \big), \ T_0 = 0,
\]
with $\Expon(\cdot)$ denoting the exponential distribution parameterized by
rate and $\cdf^{-1}_{\cdot}$ the inverse function of the cumulative
distribution function.
Treating the two sequences $\{\para_{1i}\}_{i \in \nat}$ and $\{\para_{2i}\}_{i \in \nat}$ as the two parameters of the PUS, provides a PUS presentation of the CTMC.
Although they are discrete sequences, we can easily represent them as continuous functions of time to fit into the definition of the parameters of the PUS.
Then, the probabilistic hyperproperties (e.g., fairness) for the initial CTMC, should be
statistically verified on the aforementioned PUS.

\section{Hyper Probabilistic Signal-Temporal Logic}
\label{sec:logic}

To formally express and reason about probabilistic hyperproperties on
real-valued signals, we introduce the logic \hpstl, which can be viewed as a probabilistic extension
of the {\em signal temporal logic} (STL) for hyperproperties.
We introduce the syntax and semantics of \hpstl for PUS in~\cref{sub:syntax,sub:semantics}, before presenting its use to capture relevant properties of embedded systems in \cref{sec:example}.

\subsection{Syntax} \label{sub:syntax}
We define \hpstl
formulas inductively as:
\begin{align}
  & \varphi \Coloneqq \ap^\pv
    \ \vert \ \varphi^\pv
		\ \vert \neg \varphi
		\ \vert \ \varphi \land \varphi
		\ \vert \ \varphi \U_{[t_1, t_2]} \varphi
		\ \vert \ p \Join p \label{eq:hpstls_1}
	\\ & p \Coloneqq \P^{\pvs} \varphi
		\ \vert \ \P^{\pvs} p
		\ \vert \ f(p, \dots, p) \label{eq:hpstls_2}
\end{align}
where
\begin{itemize}
  \item $\ap\in\aps$, and $\aps$ is the finite set of {\em atomic propositions},
  \item $t_1 < t_2$ with $t_1, t_2 \in \rat_\infty$,
  \item $\pv$ is a path variable, and
$\pvs$ is a set of path variables,
  \item $\P$ is the probability operator,
  \item $\Join \ \in \{<, >,=,\leq,\geq\}$,
  \item $f: \real^n \rightarrow \real$ is a $n$-ary measurable
function, constants are viewed as $0$-ary functions,
  \item $\fv(\cdot)$ denotes the set of free path variables in
$\varphi$, \yw{i.e.,} the path variables not quantified by a probability operator
through~\eqref{eq:hpstls_2}, and $\fv(\varphi) = \emptyset$ in the
$2^\mathrm{nd}$ rule of~\eqref{eq:hpstls_1}. This is recursively defined
 by:
  \[
  \begin{split}
    & \fv(\ap^\pv) = \{ \pv \}, \ \fv(\varphi^\pv) = \{ \pv \}
  , \ \fv(\neg \varphi) = \fv(\varphi),
    \\ & \fv(\varphi_1 \land \varphi_2) = \fv(\varphi_1) \cup \fv(\varphi_2)
  , \ \fv(\varphi_1 \U_{[t_1, t_2]} \varphi_2) = \fv(\varphi_1) \cup \fv(\varphi_2),
    \\ & \fv(p_1 \Join p_2) = \fv(p_1) \cup \fv(p_2), \ \fv(f(p_1, \ldots, p_n)) = \midcup_{i \in [n]} \fv(p_i),
    \\ & \fv(\P^{\pvs} (\varphi)) = \fv(\varphi) \backslash \pvs, \
\fv(\P^{\pvs} (p)) = \fv(p) \backslash \pvs.
\end{split}
  \]
\end{itemize}

\yw{Other common logic operators can be derived as follows:}
$\varphi \lor \varphi' \equiv \neg (\neg \varphi \land \neg \varphi')$,
$\True \equiv \varphi \vee \neg \varphi$, $\varphi
\Rightarrow \varphi' \equiv \neg \varphi \lor \varphi'$, $\F_{[t_1, t_2]} \varphi
\equiv \True \, \U_{[t_1, t_2]} \, \varphi$, and $\G_{[t_1, t_2]} \varphi \equiv \neg
\F_{[t_1, t_2]} \neg \varphi$.
\yw{In addition, we denote $\U_{[0, \infty)}$, $\F_{[0, \infty)}$, and $\G_{[0, \infty)}$ by $\U$, $\F$, $\G$, respectively.}

A formula with $\fv(\varphi) = \emptyset$ is called a {\em state formula} and 
requires no instantiation of free path variables, thus can be evaluated on a 
state of the PUS.
Therefore, we can associate another path variable $\pv$ to it.
All other formulas are referred to as \emph{path formulas}, as their 
correctness depends on the instantiation of their free path variables.
It specifies on which path a state formula should be satisfied, so that a \hpstl formula can reason simultaneously on multiple paths.
Finally, for the first two rules of \eqref{eq:hpstls_2}, we assume that $\pvs$
is contained in $\fv(\varphi)$ or $\fv(p)$ for the probability quantification
to be non-trivial.
Observe that \hpstl can be viewed as the probabilistic version of
HyperSTL~\cite{nguyen_HyperpropertiesRealvaluedSignals_2017} by replacing the
existential and universal quantifiers over signals with probabilistic
quantifiers over paths in~\eqref{eq:hpstls_2}.

\hpstl has the following unique features.
It allows for the simultaneous probability quantification over several paths
(as we show for the sensitivity analysis of powertrain controllers in
\cref{sec:example}).
It also allows for the arithmetics and comparison of probabilities, and the nesting
of probability operators quantifying different paths (as shown in the queueing
fairness analysis described in \cref{sec:example}).

\hpstl reduces to a non-hyper probabilistic signal temporal logic (PSTL)
if it only has one path variable in it.
PSTL can still define probability satisfaction of single atomic 
propositions, so it subsumes the MITL for probability distributions
from~\cite{wang_VerifyingContinuoustimeStochastic_2016}.
But, PSTL (and thus \hpstl) does not subsume PrSTL
in~\cite{sadigh_SafeControlUncertainty_2016}, since it does not allow a
time-varying probability threshold, as is allowed in PrSTL.
Still, augmenting \hpstl syntax to allow time-varying functions is
straight-forward.

Finally, note that to simplify our presentation of \hpstl syntax and  
semantics, while allowing for verification of complex systems such as \phioa, we 
only include simultaneous or consecutive probabilistic quantification (e.g., 
$\P^{\{\pv_1, \pv_2\}}$ or $\P^{\pv_1} \P^{\pv_2}$) over the paths from a single \yw{initial}
state.
\hpstl can be augmented by allowing nested existential and universal
quantification over multiple states in the same way
as~\cite{wang2019statistical,abraham_HyperPCTLTemporalLogic_2018}.
Specifically, in addition to the probabilistic quantification over the paths, 
one can add
extra state quantification of these paths to specify from which state the path
starts, like $\exists \sysState_1^{\pv_1}. \forall \sysState_2^{\pv_2}.
\P^{\{\pv_1, \pv_2\}}$.
However, verifying such formulas generally requires exhaustive iteration over
all the states, which is challenging, if not impossible, on systems with
infinite state spaces like \phioa. 
Therefore, in this paper, we do not include state quantification which in our logic can be done as
presented in~\cite{wang2019statistical,abraham_HyperPCTLTemporalLogic_2018}.

\subsection{Semantics} \label{sub:semantics}

We define the satisfaction relation for \hpstl state formulas~on a PUS
$\sys$ by
\begin{equation} \label{eq:semantic1}
  \begin{array}{l@{\hspace{1em}}c@{\hspace{1em}}l}
(\sys, \sysState) \models \varphi & \Leftrightarrow &  \sys \models \istx{\varphi}{\sysState},
\end{array}
\end{equation}where $V_\sysState$ is an assignment of path variables to the paths of the PUS $\sys$
starting from a state $\sysState$, and $\istxx{\varphi}$ is the instantiation of
the assignment $V_\sysState$ on $\varphi$.
Here, the instantiation specifies the initial state for the formula $\varphi$.

The satisfaction relation for the \hpstl path formulas is defined with respect to the assignment
$V_\sysState$ by:
\begin{equation} \label{eq:semantic2}
\begin{array}{l@{\hspace{1em}}c@{\hspace{1em}}l}
\sys \models \istx{p \Join p}{\sysState} & \Leftrightarrow & \sys \models
\istx{p}{\sysState} \Join \istx{p}{\sysState}
\\ \sys \models \istx{f(p, \ldots, p)}{\sysState} & \Leftrightarrow &
\sys \models f \big(\istx{p}{\sysState}, \ldots, \istx{p}{\sysState}\big)
\\ \sys \models \istx{\P^{\pvs} (\varphi)}{X} & \Leftrightarrow &
\sys \models \pr_{\Sig \sim \pathn{\abs{\pvs}} (\sysState)} \Big( \big( \sys, V_\sysState[\pvs \rightarrow \Sig] \big) \models \varphi \Big)
\\ (\sys, V_\sysState) \models \ap^\pv & \Leftrightarrow & \ap \in \lb(V_\sysState(\pv) (0))
\\ (\sys, V_\sysState) \models \Phi^\pv & \Leftrightarrow & (\sys, V_\sysState(\pv)) \models \Phi
\\ (\sys, V_\sysState) \models \neg \varphi & \Leftrightarrow & (\sys, V_\sysState) \not\models \varphi\\
(\sys, V_\sysState) \models \varphi_1 \land \varphi_2 &  \Leftrightarrow & (\sys, V_\sysState) \models
\varphi_1 \textrm{ and } (\sys, V_\sysState) \models \varphi_2
\\ (\sys, V_\sysState) \models \varphi_1 \U_{[t_1, t_2]} \varphi_2  & \Leftrightarrow & \exists t \in [t_1, t_2]. \ \Big(\forall t' < t. \big( \sys, V_\sysState^{(t')} \big) \models \varphi_1 \Big) \land \big( \sys, V_\sysState^{(t)} \big) \models \varphi_2
\end{array}
\end{equation}
where
\begin{itemize}
  \item $\lb(\sysState)$ is the set of labels of the PUS state $\sysState$,
  \item $\pathn{\abs{\pvs}} (\sysState)$ is the collection of all
$\abs{\pvs}$-tuples of paths starting from state $\sysState$,
  \item $V_\sysState[\pvs \rightarrow \Sig]$ is a revision of the assignment
$V_\sysState$ by assigning the set $\Sig$ of paths to the set $\pvs$ of path 
variables, respectively,
  \item $V_\sysState^{(t)}$ is the $t$-shift of the assignment $V_\sysState$, 
defined by $\big(V_\sysState^{(t)} (\pv)\big) \allowbreak = 
(V_\sysState(\pv))^{(t)}$ for all path variables $\pi$ in assignment~$V$.
\end{itemize}
Finally, we note the equivalence $\ap \in \lb(V_\sysState(\pv) (0)) \  
\Leftrightarrow \ \ap \in \lb(\sysState)$ and $(\sys, V_\sysState(\pv)) \models 
\Phi \ \Leftrightarrow \ (\sys, \sysState) \models \Phi$ for the $4^\mathrm{th}$ 
and $5^\mathrm{th}$ rules in~\eqref{eq:semantic2}. The following example 
illustrates the semantics of \hpstl.

\begin{example} \label{ex:conditional_probability}
As shown in \cref{fig:example1}, consider a CTMC $\sys$ that models a queue 
that is initially empty and is of buffer size $2$. 
The transition rate matrix of $\sys$ is the following:
$$\begin{bmatrix}
  - 1 & 1 & 0 \\
  2 & -3 & 1 \\
  0 & 2 & -2
\end{bmatrix}$$
The CTMC satisfies the following
\hpstl formula:
\[
\varphi = \P^{\pv_1} \big( (\neg s_1^{\pv_1}) \U ( s_1^{\pv_1} \U_{[0, 1]} s_0^{\pv_1} ) \big) - \P^{\pv_2} \big( (\neg s_2^{\pv_2}) \U ( s_2^{\pv_2} \U_{[0, 1]} s_1^{\pv_2} ) \big) > 0.05,
\]
The formula asserts that the probability difference between (i) finishing a task within $1$ time delay after first having $1$ in queue
and (ii) finishing a task within $1$ time delay after first having $2$ in queue, is greater than $0.05$.
The is because the probability for (i) is $\frac{2}{3} (1 - e^{-3}) \approx 0.633$ and that for (ii) is $1 - e^{-2} \approx 0.865$.
\hfill $\lhd$
\end{example}

\begin{figure}[!t]
\centering
\begin{tikzpicture}
  \node[state, initial text=, initial] (0) {$s_0$};
  \node[state, right of=0, xshift=0.5cm] (1) {$s_1$};
  \node[state, right of=1, xshift=0.5cm] (2) {$s_2$};

  \path[->] (0) edge[bend left] node[above] {$1$} (1);
  \path[->] (1) edge[bend left] node[above] {$1$} (2);

  \path[->] (2) edge[bend left] node[above] {$2$} (1);
  \path[->] (1) edge[bend left] node[above] {$2$} (0);

\end{tikzpicture}
\caption{CTMC Model of A Queue \hpstl.\label{fig:example1}}
\end{figure}
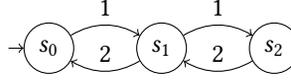

\subsection{Expressivity of \hpstl}

\begin{theorem}
\hpstl subsumes \pstl on CTMCs.
\end{theorem}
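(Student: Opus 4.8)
The plan is to give a translation $\psi\mapsto\hat\psi$ from \pstl formulas to \hpstl formulas, and then to show by structural induction on $\psi$ that $\hat\psi$ has, on the PUS encoding of a CTMC (\cref{sub:ctmc}), the same truth value that $\psi$ has on that CTMC. Since, by \cref{sub:syntax}, \pstl is exactly the fragment of \hpstl built over a single path variable, the translation is essentially the identity on syntax---at most a uniform relabeling so that atomic propositions carry one fresh path variable $\pv$ and every probability operator carries $\{\pv\}$---and $\hat\psi$ is then a \hpstl formula with $\fv(\hat\psi)=\emptyset$ over the sole variable $\pv$. Consequently the whole content of the statement is semantic: one must check that the \hpstl satisfaction relation of \cref{sub:semantics}, specialized to such single-variable formulas and evaluated on the PUS that encodes a CTMC, agrees with the standard continuous-time probabilistic semantics of \pstl on that CTMC.

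The hard part will be identifying the probability space hidden inside the operator $\P^{\pv}$. By \eqref{eq:semantic2}, $\P^{\pv}\varphi$ is evaluated at a state $\sysState$ as $\pr_{\Sig\sim\pathn{1}(\sysState)}\big((\sys,V_\sysState[\pv\rightarrow\Sig])\models\varphi\big)$, where $\pathn{1}(\sysState)$ is the distribution over single paths from $\sysState$ induced by the PUS parameters. For the CTMC encoding these parameters are the two i.i.d.\ sequences $\{\para_{1i}\}_{i\in\nat}$ and $\{\para_{2i}\}_{i\in\nat}$ of $\mathrm{Uniform}[0,1]$ variables driving the Gillespie construction of \cref{sub:ctmc}. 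I would first prove a lemma that the pushforward of this product measure under the Gillespie map is exactly the canonical path measure of the CTMC: the state-update rule realizes the embedded jump chain with transition probabilities $M_{lk}/(-M_{ll})$ for $k\neq l$ by the standard inverse-CDF-of-a-discrete-distribution argument, and $T_{i+1}-T_i=\cdf^{-1}_{\Expon(\sum_{k\neq l}M_{lk})}(\para_{2i})$ realizes an $\Expon(-M_{ll})$ holding time by inverse-transform sampling, with independence across $i$ giving the full product structure. This is what makes the $\pr$ appearing in the \hpstl semantics literally coincide with the path probability used in the \pstl semantics. Care is needed here for measurability---the timed-until and cylinder events generating the path $\sigma$-algebra must be measurable images of the parameter space---and, if \pstl admits unbounded temporal modalities, for non-explosion, which holds automatically since the state space and all rates are finite.

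Given the lemma, the structural induction is routine. The base cases $\ap^{\pv}$ and $\Phi^{\pv}$ collapse, via the two equivalences noted right after \eqref{eq:semantic2} ($\ap\in\lb(V_\sysState(\pv)(0))\Leftrightarrow\ap\in\lb(\sysState)$ and $(\sys,V_\sysState(\pv))\models\Phi\Leftrightarrow(\sys,\sysState)\models\Phi$), to the \pstl atomic and state-formula semantics at $\sysState$. The Boolean connectives and the timed-until $\U_{[t_1,t_2]}$ are syntactically shared by the two logics, and the shift $V_\sysState^{(t)}$ of an assignment is precisely the usual path-suffix operation on CTMC paths, so these cases carry over verbatim. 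The probability case is immediate from the lemma: $\P^{\pv}\varphi\Join p$ holds in \hpstl iff the $\pathn{1}$-measure of the (measurable) event $\{(\sys,V_\sysState[\pv\rightarrow\Sig])\models\varphi\}$ stands in relation $\Join$ to $p$, which by the lemma equals the CTMC probability of the set of paths satisfying the corresponding \pstl subformula; the derived forms $p\Join p$, $f(p,\dots,p)$ and nested probability operators are treated the same way, being syntactically common to both logics. Assembling the cases gives $(\sys,\sysState)\models\hat\psi$ if and only if $\psi$ holds at $\sysState$ on the CTMC, for every state $\sysState$, which is the claimed subsumption. I expect the measure-identification lemma, with its measurability bookkeeping, to be the only real obstacle; everything else is bookkeeping along the recursion.
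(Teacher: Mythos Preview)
Your proposal proves the wrong direction. In the paper, \pstl is \emph{defined} as the single-path-variable fragment of \hpstl (\cref{sub:syntax}: ``\hpstl reduces to a non-hyper probabilistic signal temporal logic (\pstl) if it only has one path variable in it''). Under that definition the inclusion you labor to establish---that every \pstl formula has an equivalent \hpstl formula---is a syntactic tautology: the translation $\psi\mapsto\hat\psi$ is the identity, and no measure-identification lemma is needed because the \pstl semantics \emph{is} the \hpstl semantics restricted to one path variable. The theorem, as the paper intends it, is a \emph{strict} subsumption claim: there is a \hpstl formula that no \pstl formula can express on CTMCs. The paper's proof accordingly opens with ``It suffices to show that there exist formulas in \hpstl that cannot be expressed in \pstl.''

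Concretely, the paper exhibits a conditional-probability formula
\[
\varphi = \Big( \frac{\P^\pv\big( \mathrm{Init}^\pv \Rightarrow \F  (\ap_1^\pv \land \ap_2^\pv)\big)}{\P^\pv\big( \mathrm{Init}^\pv \Rightarrow \F \ap_2^\pv\big)} = \tfrac{1}{2} \Big)
\]
on a four-state CTMC with three equiprobable absorbing successors of $\mathrm{Init}$, and argues that any \pstl state formula ultimately reduces to $\P(\psi)\Join c$ for a path formula $\psi$; since $\psi$ can only carve out a subset of the three path types, $\P(\psi)$ takes values in $\{0,\tfrac13,\tfrac23,1\}$, never $\tfrac12$, so $\varphi$ is inexpressible in \pstl. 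Your measurability and Gillespie-pushforward bookkeeping is correct but addresses a non-issue here; what is missing from your plan is any separation argument at all. If you want to repair the proposal, replace the translation-plus-induction with a choice of witness CTMC and \hpstl formula together with a proof that the value it computes lies outside the range attainable by \pstl on that CTMC.
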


\begin{proof}

It suffices to show that there exist formulas in \hpstl that cannot be expressed in \pstl.
The general idea is to show that \pstl cannot express conditional probabilities, while \hpstl can.
Consider the CTMC $\sys$ given in \cref{fig:thm1}, and the \hpstl state formula
\[
\varphi = \Big( \frac{\P^\pv\big( \mathrm{Init}^\pv \Rightarrow \F  (\ap_1^\pv \land \ap_2^\pv)\big)}{\P^\pv\big( \mathrm{Init}^\pv \Rightarrow \F \ap_2^\pv\big)} = \frac{1}{2} \Big),
\]
expressing a conditional probability for a path $\pi$ of $\sys$.
Clearly, it is satisfied for the state $\mathrm{Init}$.

\begin{figure}[!t]
\centering
\begin{tikzpicture}
  \node[state, initial text=, label={0:$\mathrm{Init}$}] (0) {$\sysState_0$};
  \node[state, below of=0, label={270:$\{\ap_1\}$}, yshift = 
-.5cm, xshift=-1.5cm] (1) {$\sysState_1$};
  \node[state, below of=0, label={270:$\{\ap_2\}$}, yshift = 
-.5cm] (2) {$\sysState_2$};
  \node[state, below of=0, label={270:$\{\ap_1,\ap_2\}$}, xshift=1.5cm, yshift 
= -.5cm] (3) 
{$\sysState_3$};

  \path[->] (0) edge node[right] {$1$} (1)
  (0) edge node[right] {$1$} (2)
  (0) edge node[right] {$1$} (3);
\end{tikzpicture}
\caption{\hpstl on CTMC.\label{fig:thm1}}
\end{figure}

We claim that $\varphi$ cannot be expressed in \pstl.
By the syntax and semantics of \pstl~\cite{sadigh_SafeControlUncertainty_2016}, it suffices to show that $\varphi$ cannot be expressed by a formula $\P (\psi)$, where $\psi$ is a \pstl path formula derived by concatenating  a set of \pstl state formulas $\varphi_1, \ldots, \varphi_n$ with $\land, \neg$, or temporal operators.
These state formulas are either $\True$ or $\False$ on the states $\sysState_0$, $\sysState_1$, $\sysState_2$, and $\sysState_3$.
Thus, whether a path satisfies $\psi$, defines a subset of the paths
from the state $\sysState_0$ in the CTMC,
Since the probability of a path ended up in any $\sysState_i$ is $1/3$ for $i = 1,2,3$, the formula $\P (\psi)$ can only take values in $\{0, 1/3, 2/3, 1\}$.
However, by the semantics of \hpstl, the fractional probability on the right side of the \yw{equation} has value $1/2$, thus $\varphi$ evaluates to true and cannot be expressed by $\P (\psi)$~in~\pstl.
\end{proof}

\section{H\lowercase{yper}PSTL in Action}
\label{sec:example}

In this section, we demonstrate how \hpstl can be used to capture relevant 
properties of CPS.

\subsection{Sensitivity to Modeling Errors in \phioa} 

A typical example of a CPS that can be modeled as a \phioa~is the automotive powertrain, where the response to the change in driving behaviors is of key interest.
Note that the dynamical system parameters, even for the same type of powertrains, vary across different systems. Thus, it is critical to analyze if \emph{in most cases},~the~change in dynamical response of the controlled system stays within permitted amount $\delta$ when the system parameters change; i.e., if  the dynamical response deviation is within $\delta$ with probability of at least $1 - \varepsilon$.

Consider as an example, the sensitivity of the Toyota Powertrain 
Controller~\cite{jin_PowertrainControlVerification_2014}  under probabilistic 
uncertainty in its dynamical~parameters (i.e., the system model). As shown in \cref{fig:sensitivity}, we consider statistically verifying the 
probabilistic boundedness of the sensitivity of the (first) hitting time 
$\tau$ to a desired working region (where the error to the desired A\/F ratio is 
less than $5\%$) under the probabilistic uncertainty in RPM.
Mathematically, this can be represented~by
\[
\pr_{\pv_1, \pv_2} \big(\abs{\tau^{\pv_1} - \tau^{\pv_2}} \leq \delta \big) > 1 - \varepsilon,
\]
for some given values $\delta, \varepsilon > 0$, where $\pv_1$ and $\pv_2$ are two \yw{statistically independent} sample paths of the system.
To express this formally using \hpstl,~we introduce a predicate $\apred$ for the desired working region of the system (i.e., under the dashed line in \cref{fig:sensitivity}).
If $0 \leq \tau^{\pv_2} - \tau^{\pv_1} \leq \delta$, the switch from $\neg \apred^{\pv_2}$ to $\apred^{\pv_2}$ for the path $\pv_2$ happens within time $\delta$ after $\neg \apred^{\pv_1}$ changes to $\apred^{\pv_1}$ for the path $\pv_1$.\footnote{Meanwhile, $\apred^{\pv_2}$ may switch back to $\neg \apred^{\pv_2}$ for $\pi_2$, but the first hitting time $\tau^{\pv_2}$ will not change.} 
This can be equivalently expressed as $(\neg \apred^{\pv_1} \land \neg \apred^{\pv_2}) \U (\apred^{\pv_1} \yw{\land \F_{[0, \delta]}} \apred^{\pv_2})$, and accordingly, the probabilistic boundedness of sensitivity is expressed as
\begin{equation} \label{eq:spec_tp}
\P^{\{\pv_1, \pv_2\}} \Big(  (\neg \apred^{\pv_1} \land \neg 
\apred^{\pv_2}) \, \U \big( (\apred^{\pv_1} \yw{\land \F_{[0, \delta]}} \apred^{\pv_2}) \lor 
(\apred^{\pv_2} \yw{\land \F_{[0, \delta]}} \apred^{\pv_1}) \big) \Big) \geq 1 - 
\varepsilon.
\end{equation}
Note that~\eqref{eq:spec_tp} involves probability quantification over a set of 
paths, which cannot be expressed in non-hyper temporal logics.

\subsection{Probabilistic Anomaly Detectability}

An important feature of CPS is detectability of system anomalies, independently of the type of used \emph{sound} detector; this can be captured as probabilistic overshoot observability
on system outputs, where the input overshoot captures that an anomaly has occurred.
Specifically, we require that with probability of at least $1 - \varepsilon$ it holds
that: if (i)~in one execution, a signal $\pi$ steps (i.e., anomaly starts) and 
then stays bounded (e.g., within the modeled noise bound) for some time interval~$I$; and (ii)~in another execution, signal $\pi'$ steps and then overshoots (i.e., beyond the noise bound); then (iii)~the distance between the two signals is greater than a predefined threshold (i.e., the anomaly overshoot can be observed by a detector on system output). This is captured as the following~\hpstl formula
\begin{equation} \label{eq:spec_oo}
	\P^{\{\pi,\pi'\}} \bigg( \Big(\G \big(\mathit{step}^\pi  
	\Rightarrow \G_I (x^\pi < c)\big)  \wedge  
	\F\big(\mathit{step}^{\pi'}
	\wedge \F_I (x^{\yw{\pi'}} > c)\big) \Big)  \Rightarrow \
	\big(\F_I d(y^\pi, y^{\pi'}) > c' \big) \bigg) > 1 - \varepsilon,
\end{equation}
where $x$ is the input and $y$ is the output. 
As~\eqref{eq:spec_tp}, the formula in~\eqref{eq:spec_oo} also involves probability 
quantification over a~set~of~paths.

\subsection{Workload Fairness in Queueing Networks}

As shown in \cref{fig:queueing_network},
consider an embedded processing system with $n$ front servers and $m$ back servers (as in e.g.,~\cite{tang2015hardware}).
The requests (e.g., task, packets) arrive at each front-end queue, probabilistically over time, into buffers of different sizes.
For each queue, the requests are preprocessed
with probabilistic execution times,
and delivered to back servers with different buffer sizes, following some scheduling policy. We can probabilistically model the arrival and processing of requests by Markov Modulated Poisson Processes (MMPP)
of different parameters across all the 
servers~\cite{bolch2006queueing}. In the general case, 
this setup yields no easy exhaustive~solution.

\begin{figure}[!t]
\centering
\def\svgwidth{0.5\linewidth}
\begingroup \makeatletter \providecommand\color[2][]{\errmessage{(Inkscape) Color is used for the text in Inkscape, but the package 'color.sty' is not loaded}\renewcommand\color[2][]{}}\providecommand\transparent[1]{\errmessage{(Inkscape) Transparency is used (non-zero) for the text in Inkscape, but the package 'transparent.sty' is not loaded}\renewcommand\transparent[1]{}}\providecommand\rotatebox[2]{#2}\newcommand*\fsize{\dimexpr\f@size pt\relax}\newcommand*\lineheight[1]{\fontsize{\fsize}{#1\fsize}\selectfont}\ifx\svgwidth\undefined \setlength{\unitlength}{238.55112535bp}\ifx\svgscale\undefined \relax \else \setlength{\unitlength}{\unitlength * \real{\svgscale}}\fi \else \setlength{\unitlength}{\svgwidth}\fi \global\let\svgwidth\undefined \global\let\svgscale\undefined \makeatother \begin{picture}(1,0.63048974)\lineheight{1}\setlength\tabcolsep{0pt}\put(0,0){\includegraphics[width=\unitlength,page=1]{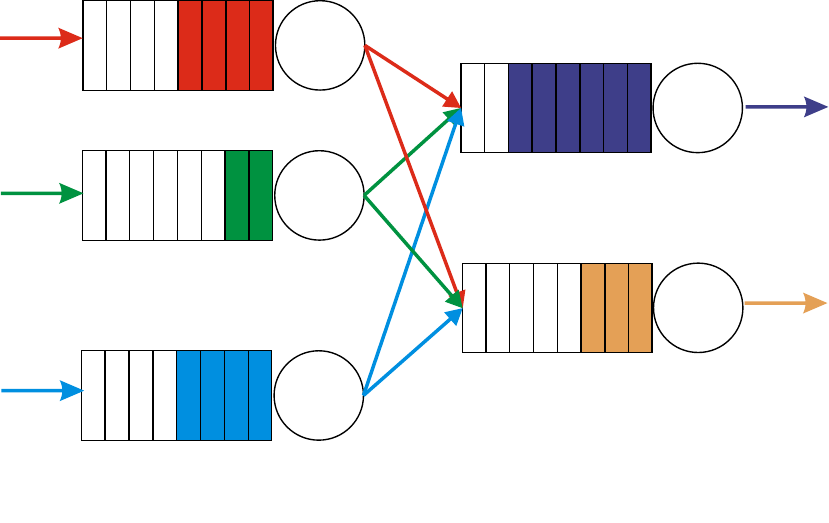}}\put(0.67398988,0.31853246){\color[rgb]{0.21568627,0.20392157,0.20784314}\rotatebox{90}{\makebox(0,0)[lt]{\lineheight{1.25}\smash{\begin{tabular}[t]{l}...\end{tabular}}}}}\put(0.21748546,0.21350742){\color[rgb]{0.21568627,0.20392157,0.20784314}\rotatebox{90}{\makebox(0,0)[lt]{\lineheight{1.25}\smash{\begin{tabular}[t]{l}...\end{tabular}}}}}\put(0.53097283,0.01434441){\color[rgb]{0,0,0}\makebox(0,0)[lt]{\lineheight{1.25}\smash{\begin{tabular}[t]{l}$m$ processors\end{tabular}}}}\put(0.03100454,0.01434441){\color[rgb]{0,0,0}\makebox(0,0)[lt]{\lineheight{1.25}\smash{\begin{tabular}[t]{l}$n$ servers\end{tabular}}}}\end{picture}\endgroup  \caption{Queueing Network.\label{fig:queueing_network}}
\end{figure}

Our goal is to check if a request-delivering policy between the front 
and back servers is fair~\cite{eryilmaz2006joint,georgiadis2006resource}; 
i.e., if a back server $i$ is more likely to be overloaded than another back 
server $j$.
Let $\tau_i$ and $\tau_j$ be the overload times of the back server $i$ and $j$, respectively.
We define a fairness property that with probability of at least $1 - 
\varepsilon$, given the overload time for the back server $i$, the back server 
$j$ is overloaded much earlier or later (more than some $t > 0$ than that 
time), with approximately equal probability (i.e., difference less than 
some~$\delta$)
\[
	\pr_{\pv_1} \Big( \big\vert \pr_{\pv_2} (\tau_i^{\pv_1} - \tau_j^{\pv_2} > t) - \pr_{\pv_2} (\tau_j^{\pv_2} - \tau_i^{\pv_1} > t) \big\vert < \delta \Big) > 1 - \varepsilon.
\]

To express this property in \hpstl, for $i \in [1, m]$, let $\apred_i$ be the predicate 
of the overload of the buffer of the $i^{th}$ back server.
The event that the back server $i$ is overloaded earlier than the back server 
$j$ more than time $\tau > 0$, can be expressed as $(\neg \apred_i^{\pv_1} 
\land \neg \apred_j^{\pv_2} ) \U (\apred_i^{\pv_1} \yw{\land \F_{[\tau,\infty)}} 
\apred_j^{\pv_2})$.
Hence, a fairness policy requirement  is captured by a \hpstl~formula
\begin{equation} \label{eq:spec_qn}
\begin{split}
	& \P^{\pv_1} \Big( \big\vert \P^{\pv_2} \big( (\neg \apred_i^{\pv_1} \land \neg \apred_j^{\pv_2} ) \U (\apred_i^{\pv_1} \yw{\land \F_{[\tau,\infty)}} \apred_j^{\pv_2}) \big)
	\\ & \hspace{6pt} - \P^{\pv_2} \big( (\neg \apred_i^{\pv_1} \land \neg \apred_j^{\pv_2} ) \U (\apred_j^{\pv_2} \yw{\land \F_{[\tau,\infty)}} \apred_i^{\pv_1}) \big) \big\vert \leq \delta \Big) \geq 1\hspace{-2pt} - \hspace{-2pt}\varepsilon.
\end{split}
\end{equation}
Note that~\eqref{eq:spec_qn} involves both comparison of probabilities and nesting of probability operators quantifying different paths, which are not allowable in common non-hyper temporal logics.

\section{Statistical Verification of \hpstl Properties}
\label{sec:smc}

\yw{In this section, we study the SMC of \hpstl on a PUS with given 
inputs.
As with previous works, we focus on handling the probability operators in HyperPSTL by sampling, which is the main issue for the SMC of probabilistic temporal logic. 
Through Sections~\ref{sub:pvalue} -- \ref{sub:nested}, we propose SMC 
algorithms based on Clopper-Pearson (CP) significance level calculation for all 
the ways the probability operators can be used or nested in \hpstl.
Accordingly, any nested HyperPSTL can be verified recursively by applying these 
SMC~algorithms.
The handling of temporal operators is similar to that of hyperSTL~\cite{nguyen_HyperpropertiesRealvaluedSignals_2017} and is thus not fully discussed due to the space limitations.
For a PUS, verifying bounded-time properties is straightforward;  verifying 
unbounded-time properties is more involving, and will be part of the future work.}

\subsection{SMC via CP Significance Level} \label{sub:pvalue}

A new feature of \hpstl compared to common temporal logics is the simultaneous 
probability quantification over multiple path variables. We now illustrate the 
idea of SMC for such formulas. Consider \hpstl formula $\Phi = (\P^\pvs \varphi 
< p)$, where (i) $\pvs = \{\pv_1, \ldots, \pv_K\} = \fv(\varphi)$ is the set of 
free path variables of $\varphi$, (ii) $p \in [0,1]$ is a probability 
threshold, and (iii) $\varphi$ contains no probability operator.
The semantics of $\Phi$ is
\[
\quad p_\varphi = \pr_{\Sig \sim \pathn{\abs{\pvs}} (\sysState)} \Big( \yw{\big( \sys}, V_\sysState[\pvs \rightarrow \Sig] \big) \models \varphi \Big) < p.
\]
\yw{The truth value of $\varphi$ can be evaluated on a set of concrete sample paths $\Sig = \{\sig_1, \ldots, \sig_K\}$ by assigning the concrete sample path $\sig_i$ to the free path variable $\sig_i$ for $i \in [K]$.} Hence, with a slight abuse of notation, we denote
\begin{equation} \label{eq:varphi}
	\varphi(\Sig) = \begin{cases}
		1, &\text{ if $\varphi$ is true on } \Sig, \\
		0, &\text{ otherwise.}
	\end{cases}
\end{equation}

\yw{Previous SMC approaches have used the sequential probability ratio test 
(SPRT) to evaluate $\Phi$ with the specification of an indifference 
margin~\cite{legay_StatisticalModelChecking_2015}.}
Specifically, assuming that
\begin{equation} \label{eq:assm_simple}
 	\abs{p_\varphi  - p} > \delta
\end{equation}
for some $\delta > 0$,
to evaluate $\Phi$, it suffices to test the two most 
indistinguishable cases, i.e., a Simple Hypothesis Testing (SHT) problem with 
two hypothesis,
\begin{equation} \label{eq:simp_check}
	\begin{split}
		H_0: p_\varphi = p - \delta, \quad H_1: p_\varphi = p + \delta,
	\end{split}
\end{equation}
which can then be solved by SPRT~\cite{hogg2005introduction}.

\yw{Since the choice of indifference margin is somewhat arbitrary,} we propose an indifferent margin-free SMC approach via significance level calculation.
For $i \in [N]$ and $K = \abs{\pvs}$,
let $(\sig^{(i)}_{1}, ..., \sig^{(i)}_{K})$
be a tuple of i.i.d. sample paths
drawn from the PUS $\sys$ starting from the state $\sysState$. Checking the correctness of $\varphi$ by~\eqref{eq:varphi}
on each tuple gives the sum statistic
\[
T = \sum\nolimits_{i \in [N]} \varphi \big( \sig^{(i)}_{1}, \ldots, \sig^{(i)}_{K} \big)
\]
that obeys the binomial distribution $\mathrm{Binom} (n, p_\varphi)$.
The average statistics $T/N$ is a unbiased estimator for $p_\varphi$.
Intuitively, when  $T/N < p$, it is more likely that $p_\varphi < p$; and the same for the other case.
Hence, we define the following statistical asserting function based on the samples
\begin{equation} \label{eq:assert_simple}
	\assert((\sys, \sysState) \models \Phi) = \begin{cases}
		1, &\text{ if } T/N < p \\
		0, &\text{ otherwise.}
	\end{cases}
\end{equation}

\yw{To ensure the asymptomatic correctness of the SMC algorithm, we assume that
\begin{equation} \label{eq:indifference_simple}
	p_\varphi \neq p,
\end{equation}
which is a weaker assumption than~\eqref{eq:assm_simple}.
When~\eqref{eq:indifference_simple} holds, as the number of samples increases, the samples will be increasingly concentrated on one side of $p$ by the central limit theorem. 
Therefore, a statistical analysis based on the majority of the samples has an increasing accuracy.
When~\eqref{eq:indifference_simple} is violated, the samples would be evenly distributed on the two sides of $p$, regardless of the sample size.
Thus, no matter how the sample size increases, the accuracy of any statistical test would not increase.
This will be illustrated later in the proof of \cref{thm:simple}.}

In general, the significance level for claiming $p_\varphi \in [a,b] \subseteq 
[0,1]$ \yw{(i.e., an upper bound of the probability of making a wrong claim)}, 
when $T/N \in [a,b]$, can be computed using a method from Clopper and 
Pearson~\cite{clopper1934use} by
\begin{equation} \label{eq:alpha_cp}
\alpha_{\text{CP}} (a, b \, \vert \, T, N) = 1 - 
	\begin{cases}
	(1-a)^{N} - (1-b)^{N}
	& \text{ if } T = 0 \\
	b^{N} - a^{N}
	& \text{ if } T = N \\
	F_\text{Beta} ( b \, \vert \, T+1, N - T ) - F_\text{Beta} ( a \, \vert 
\, T, N - T + 1 )
	& \text{ otherwise.}
\end{cases}
\end{equation}
where $F_\text{Beta}(\cdot \, \vert \, T_1, T_2)$ is the cumulative probability function (CDF) of the beta distribution $\text{Beta} (T_1, T_2)$ with the shape parameters $(T_1, T_2)$.
For computing the significance level of the assertion  $\assert((\sys, \sysState) \models \Phi)$, we utilize
\begin{equation} \label{eq:alpha_cp2}
	[a,b] = \begin{cases}
		[0,p], &\text{ if } T/N < p ,\\
		[p,1], &\text{ if }  T/N > p.
	\end{cases}
\end{equation}

\yw{\begin{remark} \label{rem:cp}
The CP significance levels~\eqref{eq:alpha_cp} have the following properties. 
First, $\alpha_{\text{CP}} (a, b \, \vert \, T, N)$ increases as the interval $[a,b]$ shrinks.
That is, for any $[a',b'] \subseteq [a,b] \subseteq [0,1]$, we have $\alpha_{\text{CP}} (a', b' \, \vert \, T, N) \leq \alpha_{\text{CP}} (a, b \, \vert \, T, N)$.
In addition, for $T \notin \{0, N\}$ and $N \gg 1$, we have that
\begin{equation} \label{eq:cp}
	\alpha_{\text{CP}} (a, b \, \vert \, T, N) \approx 1 - F_\text{Beta} ( b \, \vert \, T, N - T ) + F_\text{Beta} ( a \, \vert \, T, N - T ).
\end{equation}
Since the beta distribution $\text{Beta} (T, N - T)$ has the mean $T/N$ and the variance $T(N-T)/N^2 (N+1)$, 
for fixed $T/N$, as the number of samples $N \rightarrow \infty$,
the beta distribution becomes increasingly concentrated at $T/N$, and thus $\alpha_{\text{CP}} (a, b \, \vert \, T, N) \rightarrow 0$.
This implies that the probability of making the wrong claim decreases as more samples are available.
\end{remark}}

Given a desired significance level $\alpha$, we can design a new SMC algorithm by, at each iteration, collecting $B$ new samples, computing the CP significance interval and stopping when the result is less than $\alpha$, as summarized in \cref{alg:simple}.
Correctness of \cref{alg:simple} follows directly from the definition of significance~level. 

\begin{algorithm}[!t]
\caption{SMC of $(\sys, \sysState) \models \P^\pvs \varphi < p$.\label{alg:simple}}
\begin{algorithmic}[1]
\Require PUS $\sys$, desired significance level $\alpha_d$, batch size $B$.

\State $N \gets 0$, $T \gets 0$, $K \gets \yw{\abs{\pvs}}$, initial significance level $\alpha_\mathrm{CP} \gets 1$

\While{$\alpha_\mathrm{CP} > \alpha_d$}

\For{$i \in [n]$}

\State Draw $\sig_{N+1}, \ldots \sig_{N+B}$ from $\sysState$ in $\sys$.

\State $T \gets T + \sum_{i = N+1}^{N+B} \varphi (\sig^{(i)}_{1}, \ldots, \sig^{(i)}_{K})$; $N \gets N + B$.

\EndFor

\State Update $\assert$ by~\eqref{eq:assert_simple} and $\alpha_\mathrm{CP}$ by~\eqref{eq:alpha_cp} and~\eqref{eq:alpha_cp2}.

\EndWhile

\State \Return $\assert$  and $\alpha_\mathrm{CP}$.

\end{algorithmic}
\end{algorithm}

\begin{theorem} \label{thm:simple}
\cref{alg:simple} terminates with probability $1$ and gives the correct statistical assertion with probability at least $1-\alpha_d$.
\end{theorem}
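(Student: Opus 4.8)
The plan is to prove the two assertions separately, with the standing assumption~\eqref{eq:indifference_simple} that $p_\varphi\neq p$ doing essential work in both. Observe first that the running count $T$ after $N$ samples is a sum of $N$ i.i.d.\ Bernoulli$(p_\varphi)$ variables $\varphi\big(\sig^{(i)}_{1},\dots,\sig^{(i)}_{K}\big)$, so by the strong law of large numbers $T/N\to p_\varphi$ almost surely. For termination, assume without loss of generality $p_\varphi<p$. Then almost surely there is a random $N_0$ with $T/N<p$ for all $N\ge N_0$, so for such $N$ the rule~\eqref{eq:alpha_cp2} always selects $[a,b]=[0,p]$; substituting into~\eqref{eq:alpha_cp} and using the Beta--Binomial identity $F_\text{Beta}(q\mid T,N-T+1)=\pr_{X\sim\Binom(N,q)}(X\ge T)$ gives $\alpha_{\text{CP}}(0,p\mid T,N)=\pr_{X\sim\Binom(N,p)}(X\le T)$ (with the obvious reductions when $T\in\{0,N\}$ or $p\in\{0,1\}$). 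Since $T/N\to p_\varphi<p$, this lower binomial tail tends to $0$---precisely the Beta-concentration behaviour recorded in~\cref{rem:cp}---so eventually $\alpha_{\text{CP}}<\alpha_d$ and the while-loop exits; the case $p_\varphi>p$ is symmetric via $[a,b]=[p,1]$ and $\alpha_{\text{CP}}(p,1\mid T,N)=\pr_{X\sim\Binom(N,p)}(X\ge T)\to0$. (This also explains why~\eqref{eq:indifference_simple} is needed: if $p_\varphi=p$ then $T/N$ tends to the endpoint shared by every candidate interval, $\alpha_{\text{CP}}$ hovers near $\tfrac12$, and the accuracy of the test does not improve with $N$.)

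For correctness I would first reduce to a coverage statement. We have $(\sys,\sysState)\models\Phi$ iff $p_\varphi<p$; the assertion~\eqref{eq:assert_simple} returns $1$ iff $T/N<p$ at termination, i.e.\ iff the interval chosen there is $[0,p]$; and since $p_\varphi\neq p$, the event ``$p_\varphi\in[a,b]$'' coincides with ``the returned assertion is correct'' (the only discrepancy would be the endpoint $p$, excluded by~\eqref{eq:indifference_simple}). So it suffices to show that the Clopper--Pearson interval $[a,b]$ held by \cref{alg:simple} at its (random) stopping time covers $p_\varphi$ with probability at least $1-\alpha_d$; note that, by the inversion of~\eqref{eq:alpha_cp}, $[a,b]$ is exactly the Clopper--Pearson confidence interval for the data $(T,N)$ at level $1-\alpha_{\text{CP}}(a,b\mid T,N)\ge 1-\alpha_d$. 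For a \emph{fixed} $N$ this is elementary: when $p_\varphi<p$ a wrong conclusive assertion at step $N$ requires $T\ge u^*_N:=\min\{t:\pr_{X\sim\Binom(N,p)}(X\ge t)\le\alpha_d\}$, and by stochastic domination $\pr_{p_\varphi}(T\ge u^*_N)\le\pr_{X\sim\Binom(N,p)}(X\ge u^*_N)\le\alpha_d$ (symmetrically when $p_\varphi>p$).

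The step I expect to be the main obstacle is passing from this per-step bound to the data-dependent stopping time of \cref{alg:simple}: the algorithm inspects the estimate after every batch, a naive union bound over $N$ diverges, and at the boundary $p_\varphi=p$ the law of the iterated logarithm makes the running proportion cross $u^*_N$ infinitely often---again isolating the role of~\eqref{eq:indifference_simple}. I would close this gap in two steps. First, the almost-sure convergence $T/N\to p_\varphi\neq p$ confines the indices $N$ at which a wrong conclusive assertion is even possible to an almost surely finite initial segment, since $u^*_N/N\to p>p_\varphi$ eventually forces $T/N$ permanently below $u^*_N/N$. Second, on that regime the conservativeness of the Clopper--Pearson bound lets one dominate the relevant binomial tail by a non-negative test supermartingale for the one-sided null $\{p_\varphi\le p\}$---e.g.\ a likelihood-ratio or mixture process that is a supermartingale under the true $p_\varphi<p$---so that Ville's maximal inequality yields $\pr_{p_\varphi}\big(\exists N:\text{a wrong conclusive assertion occurs at step }N\big)\le\alpha_d$. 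Combined with the termination argument, this gives $\pr(\text{the returned assertion is correct})\ge 1-\alpha_d$.
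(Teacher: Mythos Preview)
Your termination argument is essentially the paper's: both rely on $T/N$ concentrating at $p_\varphi\neq p$ and the resulting decay of $\alpha_{\mathrm{CP}}$ to zero; you use the strong law and the Beta--Binomial identity directly, whereas the paper invokes \cref{rem:cp} and the weak law, but the content is the same.

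For correctness the two approaches diverge sharply. The paper's argument is three lines: it writes $\Pr(A)=\sum_i\Pr(A\mid\tau=i)\,\Pr(\tau=i)$ for the stopping step $\tau$, asserts $\Pr(A\mid\tau=i)\ge 1-\alpha_d$ for every $i$ ``by construction of the significance intervals,'' and sums. It does not engage with the optional-stopping issue you single out as the main obstacle; the conditional-coverage claim is taken as immediate from the definition of the CP level. Your route is more careful in spirit---reducing to an anytime-valid statement and appealing to Ville's inequality---but the decisive step is only asserted, not carried out: you claim the CP tail can be ``dominated by a non-negative test supermartingale'' under $p_\varphi<p$, yet no such process is exhibited, and your two steps do not mesh cleanly (the almost-sure finiteness of the ``dangerous'' index set does not by itself bound the probability of ever landing in it, while a working Ville bound would render that first step superfluous). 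A plain likelihood ratio against the boundary value $p$ degenerates as $p_\varphi\uparrow p$, so any domination uniform in $p_\varphi$ would require a genuinely different, e.g.\ mixture, construction that you have not supplied. As written, your proposal leaves the very gap you identified still open---just more honestly flagged than in the paper's argument.
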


\begin{proof}
\yw{{\bf Termination}:
For $p_\varphi \in \{0, 1\}$, the proof is trivial.
For $p_\varphi \in (0,1)$, from assumption~\eqref{eq:indifference_simple} and without loss of generality, let $p_\varphi < p$ and $\delta = p - p_\varphi$.
Recalling the second half of \cref{rem:cp}, 
for any $T/N \in [p_\varphi - \delta/2, p_\varphi + \delta/2] \subseteq (0, p)$, 
the variance of $\text{Beta} (T, N - T)$ is lower bounded by $\min_{x \in [p_\varphi - \delta/2, p_\varphi + \delta/2]} x (1 - x) / (N+1)$. 
Therefore, as $N \rightarrow \infty$, it uniformly converges to $0$. 
This implies that $\alpha_{\textrm{CP}} (0, p \, \vert \, T, N)$ uniformly converges to $0$ -- i.e., for any given $\alpha_d > 0$, there exists $N_0(p_\varphi, \delta) \in \nat$, such that $\alpha_{\text{CP}} (0, p \, \vert \, T, N) < \alpha_d$ for any $N \geq N_0(p_\varphi, \delta)$ and $T/N \in [p_\varphi - \delta/2, p_\varphi + \delta/2] \subseteq (0, p)$.

With $N \geq N_0(p_\varphi, \delta)$, by the law of large numbers (or central limit theorem), we have $\pr \big( T/N \in [p_\varphi - \delta/2, p_\varphi + \delta/2] \big) \rightarrow 1$, as the number of samples $N \rightarrow \infty$. 
Therefore, \cref{alg:simple} terminates with probability $1$.}

{\bf Correctness}:
Let $\tau$ be the step \cref{alg:simple} terminates
and $A$ be ``the assertion $\assert$ in~\eqref{eq:assert_simple} is correct'',
then
$\pr(A) = \sum\nolimits_{i \in \nat} \pr(A   \vert \, \tau = i) \pr(\tau = i)$.
By construction of the significance intervals, for any $i \in \nat$, we have $\pr(A \allowbreak \, \vert \, \tau = i) > 1 - \alpha_d$.
In addition by {\bf Termination}, we have $\sum_{i \in \nat} \pr(\tau = i) = 1$, 
Thus, $\Pr(A) \geq 1 - \alpha_d$.
\end{proof}

\subsection{SMC of Joint Probabilities} \label{sub:joint}

Another new feature of \hpstl is the arithmetics and comparisons of the probabilities of multiple sub-properties.
\yw{For example, we can compare the satisfaction probability of $\varphi_1$ and $\varphi_2$ by the \hpstl formula $p_1 < p_2$, where $p_1 = \P^{\pvs_1} \varphi_1$ and $p_2 = \P^{\pvs_2} \varphi_2$, according to the syntax~\eqref{eq:hpstls_1}, \eqref{eq:hpstls_2}.
For simplicity, let $\pvs_1 = \fv(\varphi_1)$ and $\pvs_2 = \fv(\varphi_2)$. 
Mathematically, this is equivalent to reasoning over the joint probabilities of these properties.
Specifically, $p_1 < p_2$ can be equivalently expressed as a specification on the joint probability $(p_1, p_2) \in D$, where $D = \set{(x_1, x_2) \in [0,1]^2}{x_1 < x_2}$.

To formally capture this, we introduce an additional syntactic rule $(p_1, \ldots, p_n) \in D$ in~\eqref{eq:hpstls_1}, whose semantics is given by 
\yw{\begin{equation} \label{eq:new_rule}
(\sys, \sysState) \models (p_1, \ldots, p_n) \in D \Leftrightarrow
\sys \models \big( \istx{p_1}{\sysState}, \ldots, \istx{p_n}{\sysState} \big) \in D,
\end{equation}}
where $D \subseteq [0,1]^n$ is measurable.}

While the expressiveness of \hpstl is unchanged with the new rule~\eqref{eq:new_rule}, the conjunction and disjunction of several \hpstl formula can be simplified.
For example, the \hpstl formula $\Phi_1 \land \Phi_2$~with
\begin{equation}
\begin{split}
	& \Phi_1 = \big( f_1(\P^{\pvs_1} \varphi_1, f_2 (\P^{\pvs_2} \varphi_2, 
\P^{\pvs_3} \varphi_3)) > c_1 \big)
	\\ & \Phi_2 = \big( f_3(\P^{\pvs_2} \varphi_2) ) < c_2 \big)
\end{split}
\end{equation}
can be equivalently written as $(\P^{\pvs_1} \varphi_1, \P^{\pvs_2} \varphi_2, \P^{\pvs_3} \varphi_3) \in D$, where
\[
	D = \big\{(x_1, x_2, x_3) \in [0,1]^3 \, \vert \, f_1(x_1, f_2 (x_2, x_3)) > c_1, f_3(x_2) < c_2 \big\}.
\]

In addition, the new rule simplifies the SMC of the specification.
Previously, it requires checking both $\Phi_1$ and $\Phi_2$ separately, and then a probabilistic composition of the two results.
With the new rule, a single procedure of checking whether the joint probability $(\P^{\pvs_1} \varphi_1, \P^{\pvs_2} \varphi_2, \P^{\pvs_3} \varphi_3)$ is in $D$ is~sufficient.

We now demonstrate the idea of verifying the joint probability by checking a non-nested state formula
\[
(\sys, \sysState) \models \Ps \in D,
\]
where for $i \in [n]$, and $\varphi_i$ contains no probability operator and $\pvs_i = \fv(\varphi_i)$.
As a statistical approach is adopted, we assume that the exact probability of 
satisfying $\varphi$ does not lie within the boundary of the test region $D$, 
as stated in~\cref{ass:indifference}.

\begin{assumption} \label{ass:indifference}
To check $(\sys, \sysState) \models \Ps \in D$, we assume that
(i) the test region $D$ is a simply connected domain with $\bm (D) \neq 0$, and (ii)
\[
\Big( \pr_{\Sig_1 \sim \pathn{\abs{\pvs_1}} (\sysState)} \big( \yw{\big( \sys}, V_\sysState[\pvs_1 \rightarrow \Sig_1] \big) \models \varphi_1 \big), \ldots, \pr_{\Sig_n \sim \pathn{\abs{\pvs_n}} (\sysState)} \big( \yw{\big( \sys}, V_\sysState[\pvs_n \rightarrow \Sig_n] \models \varphi_n \big) \Big) \notin \partial D.
\]
\end{assumption}

\yw{\cref{ass:indifference} can be viewed as the multidimensional generalization of~\eqref{eq:indifference_simple}, and is necessary for asymptomatic correctness of the SMC algorithm, as discussed in \cref{sub:pvalue}.}

\begin{remark} \label{rem:indifference_margin}
Compared to previous studies on SMC using sequential probability ratio tests (SPRT)~\cite{zuliani_StatisticalModelChecking_2015,wang_StatisticalVerificationPCTL_2018}, \cref{ass:indifference} is weaker as it requires no a priori knowledge on the indifference margin.
\end{remark}

\begin{remark}
From~\cref{ass:indifference}, we have that $\Ps \allowbreak \in D_1$ and $\Ps \in D_2$  are semantically equivalent if $\overline{D_1} = \overline{D_2}$.
In addition, $\P^{\pvs} \in D$ and $\P^{\pvs} (\neg \varphi) \in D^c$ are semantically equivalent when $D^c = [0,1] \backslash D$.
\end{remark}

By the semantic rule~\eqref{eq:new_rule}, the SMC problem is converted to a 
composite hypothesis testing problem
\begin{equation} \label{eq:ht_multiple}
\begin{split}
& \begin{cases}
	H_0: & (p_{\varphi_1}, \ldots p_{\varphi_n}) \in D, \\
	H_1: & (p_{\varphi_1}, \ldots p_{\varphi_n}) \in [0,1]^n \backslash D,
\end{cases}
\\ & \quad p_{\varphi_i} = \pr_{\Sig_i \sim \pathn{\abs{\pvs_i}} (\sysState)} \big( \yw{\big( \sys},V_\sysState[\pvs_i \rightarrow \Sig_i] \big) \models \varphi_i \big)
\text{ for } i \in [n].
\end{split}
\end{equation}
For each $i \in [n]$, let $\{\Sig_j\}_{j \in [N_i]}$ be $N_i$ tuples of i.i.d sample paths of the PUS $\sys$ starting from the same state $\sysState$ that are used to estimate $p_{\varphi_i}$.
Similar to \cref{sub:pvalue} approach, we consider the statistics
\begin{equation}
	T_i = \sum\nolimits_{j = 1}^{N_i} \varphi_i (\Sig^{(j)}_{1}, \ldots \Sig^{(j)}_{K_i}), \quad K_i = \abs{\pvs_i}.
\end{equation}
where $T_i \sim \Binom(N_i , p_{\varphi_i})$.
We define the assertion using the average statistics $T_i/ N_i$ for $i \in [n]$ as
\begin{equation} \label{eq:assert_join}
	\assert((\sys, \sysState) \models \Phi) = \begin{cases}
		1 &\text{ if } \Big( \frac{T_1}{N_1}, \ldots, \frac{T_n}{N_n} 
\Big) \in D, \\
		0 &\text{ otherwise.}
	\end{cases}
\end{equation}
For this multi-dimensional case, to compute the exact CP significance level for a general domain $D$ involves multi-dimensional integrations on it, which can be \yw{computationally intensive}.
Therefore, we compute an upper bound $\alpha_\mathrm{CP}$ on the significance level of the assertion~\eqref{eq:assert_join} by finding a hypercube $\prod_{i \in [n]} [a_i, b_i] $ such that
\begin{equation} \label{eq:multiple_mle}
	\Big( \frac{T_1}{N_1}, \ldots, \frac{T_n}{N_n} \Big) \in \prod\nolimits_{i \in [n]} [a_i, b_i] \subset D.
\end{equation}
Due to the monotonicity of significance levels, the significance level of $( T_1/N_1, ..., T_n/N_n ) \in D$ is upper bounded by that of  $( T_1/N_1, ..., \allowbreak T_n/N_n ) \in \prod_{i \in [n]} [a_i, b_i]$,
which can be computed directly by compositing the significance level of $T_i/ N_i \in [a_i, b_i]$ for $i \in [n]$, using the results in \cref{sub:pvalue}.
Thus, we compute an upper bound $\yw{\bar{\alpha}}_\mathrm{CP}$ of the exact CP significance level as
\begin{equation} \label{eq:confidence_multiple}
	\yw{\bar{\alpha}}_\mathrm{CP} = 1- \prod\nolimits_{i=1}^n \yw{\alpha}_{\mathrm{CP}} (a_i, b_i \, \vert \, T_i, N_i),
\end{equation}
where $\yw{\alpha}_{\mathrm{CP}} (a_i, b_i \, \vert \, T_i, N_i)$ is defined in~\eqref{eq:alpha_cp}.

\yw{When implementing this verification approach, for each iteration, we look for a hypercube $\prod_{i \in [n]} [a_i, b_i]$ satisfying~\eqref{eq:multiple_mle} with $a_i < b_i$ for $i \in [n]$. 
Although finding such a hypercube is only possible if $(T_1/N_1, ..., T_n/N_n) \notin \partial D$, this is guaranteed with probability $1$ for large samples when \cref{ass:indifference} holds.

To minimize the upper bound of the significance level $\bar{\alpha}_\mathrm{CP}$ in~\eqref{eq:confidence_multiple}, the hypercube should be preferably as large as possible.
For a simple domain $D$, the analytic solutions of such a largest hypercube can be derived directly as a function of $T_i, N_i$ for $i \in [n]$ and the functions defining $\partial D$
More generally, especially if $D$ is convex, the largest hypercube can be derived by solving the optimization problem of maximizing its volume while keeping it inside $D$.
Admittedly, solving the optimization problem at every iteration can still be inefficient for some cases.
To remedy for this, we can (1) reduce the frequency of computing the significance level by drawing samples in batches; and (2) search only for approximate maxima in~optimization.

Finally, we note that the upper bound $\bar{\alpha}_\mathrm{CP}$ is asymptotically tight if a largest hypercube is used to compute it.
This holds because, by the law of large numbers, as the number of samples increases, $( T_1/N_1, ..., T_n/N_n )$ concentrates near $(p_{\varphi_1}, \ldots, p_{\varphi_n})$ and the largest hypercube converges to a constant one strictly containing $(p_{\varphi_1}, \ldots, \allowbreak p_{\varphi_n})$.
Thus, the probability of $( T_1/N_1, ..., \allowbreak T_n/N_n ) \in \prod_{i \in [n]} [a_i, b_i]$ converges to that of that of $( T_1/N_1, ..., \allowbreak T_n/N_n ) \in D$.}

Based on the previous discussions, we derive~\cref{alg:joint}.
Correctness of \cref{alg:joint} is given by \cref{thm:joint} that can be proved in the same way as \cref{thm:simple}.

\begin{theorem} \label{thm:joint}
\cref{alg:joint} terminates with probability $1$ and gives the correct statistical assertion with probability at least $1-\alpha$.
\end{theorem}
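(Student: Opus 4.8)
The plan is to replay the proof of \cref{thm:simple} one coordinate at a time, replacing the single binomial statistic by the vector of binomials $T_i\sim\Binom(N_i,p_{\varphi_i})$, $i\in[n]$, and the one-dimensional region $[0,p]$ or $[p,1]$ of~\eqref{eq:alpha_cp2} by the hypercube $\prod_{i\in[n]}[a_i,b_i]$ chosen in~\eqref{eq:multiple_mle}. Write $p^\star=(p_{\varphi_1},\ldots,p_{\varphi_n})$ for the true satisfaction-probability vector and $\widehat p_N=(T_1/N_1,\ldots,T_n/N_n)$ for the empirical one at a given iteration, and recall that the tuples of sample paths used to estimate distinct coordinates are drawn independently. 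As for \cref{thm:simple}, the statement splits into a termination claim and a correctness claim.

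\emph{Termination.} By \cref{ass:indifference} the point $p^\star$ avoids $\partial D$, hence lies in the interior of $D$ or in the interior of $[0,1]^n\setminus D$; assume $p^\star\in D^\circ$ (the other case is symmetric, using a hypercube inside $[0,1]^n\setminus\overline D$, i.e.\ the multidimensional analogue of~\eqref{eq:alpha_cp2}). First I would fix, once and for all, a closed hypercube $Q^\star=\prod_{i\in[n]}[a_i^\star,b_i^\star]\subset D$ with $a_i^\star<b_i^\star$ and $p^\star\in (Q^\star)^\circ$; such a $Q^\star$ exists precisely because $p^\star$ is an interior point, which is exactly where \cref{ass:indifference}(i)--(ii) is used. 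Next, as in the proof of \cref{thm:simple} and in \cref{rem:cp}, the beta distributions $\mathrm{Beta}(T_i,N_i-T_i)$ concentrate about $T_i/N_i$ as $N_i\to\infty$ --- their variance is bounded above uniformly over $T_i/N_i$ in a compact subinterval of $(a_i^\star,b_i^\star)$ --- so $\alpha_{\mathrm{CP}}(a_i^\star,b_i^\star\,\vert\,T_i,N_i)$ vanishes uniformly in $T_i/N_i$, and compositing over $i$ via~\eqref{eq:confidence_multiple} shows the bound $\bar\alpha_{\mathrm{CP}}$ built from $Q^\star$ tends to $0$ (exactly as in the one-dimensional case of \cref{rem:cp}). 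Since the algorithm uses a (near-)largest admissible hypercube and, by monotonicity of the CP level (\cref{rem:cp}), enlarging the box only decreases $\bar\alpha_{\mathrm{CP}}$, the value it actually computes is, once $\widehat p_N\in Q^\star$, no larger than the one built from $Q^\star$. Finally, by the strong law of large numbers $\widehat p_N\to p^\star$ almost surely, so with probability $1$ eventually $\widehat p_N$ lies in a compact subcube of $(Q^\star)^\circ$ and $\bar\alpha_{\mathrm{CP}}<\alpha$; hence the loop exits.

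\emph{Correctness.} This step is identical in form to that of \cref{thm:simple}. Let $\tau$ be the random iteration at which \cref{alg:joint} stops and $A$ the event ``the returned assertion~\eqref{eq:assert_join} is correct''. Conditioned on $\{\tau=i\}$, the algorithm has produced a hypercube $\prod_{j}[a_j,b_j]$ containing $\widehat p_N$ with $\prod_{j}[a_j,b_j]\subset D$ (resp.\ $\subset[0,1]^n\setminus\overline D$ if $\widehat p_N\notin D$) and reported $\bar\alpha_{\mathrm{CP}}\le\alpha$. Because each $\alpha_{\mathrm{CP}}(a_j,b_j\,\vert\,T_j,N_j)$ is a valid significance level for the one-dimensional claim $p_{\varphi_j}\in[a_j,b_j]$ (Clopper--Pearson, \cref{sub:pvalue}) and the coordinates are estimated from independent samples, the composition $\bar\alpha_{\mathrm{CP}}$ of~\eqref{eq:confidence_multiple} upper-bounds $\pr(p^\star\notin\prod_{j}[a_j,b_j])$, and therefore, since $\prod_{j}[a_j,b_j]$ lies entirely inside (resp.\ outside) $D$, it also upper-bounds the probability that the assertion~\eqref{eq:assert_join} is wrong. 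Hence $\pr(A\,\vert\,\tau=i)\ge 1-\alpha$ for every $i$; combining with $\sum_i\pr(\tau=i)=1$ from the termination part, $\pr(A)=\sum_i\pr(A\,\vert\,\tau=i)\,\pr(\tau=i)\ge 1-\alpha$.

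\emph{Main obstacle.} The delicate part is the termination half: converting ``$p^\star$ is an interior point'' into a \emph{fixed} hypercube $Q^\star$ with strictly positive-width sides inside $D$, and upgrading the pointwise vanishing of $\alpha_{\mathrm{CP}}(a_i^\star,b_i^\star\,\vert\,T_i,N_i)$ to the \emph{uniform}-in-$T_i/N_i$ statement needed to conclude once $\widehat p_N$ enters $Q^\star$. The interplay with the algorithm's per-step optimization of the hypercube is handled cleanly by the monotonicity of the CP level (larger box, smaller $\bar\alpha_{\mathrm{CP}}$); everything else is a routine transcription of the one-dimensional argument of \cref{thm:simple}.
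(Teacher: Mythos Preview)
Your proposal is correct and takes essentially the same approach as the paper: the paper's proof is the single line ``can be proved in the same way as \cref{thm:simple},'' and your write-up is precisely a faithful coordinate-wise transcription of that argument, using \cref{ass:indifference} in place of~\eqref{eq:indifference_simple} and the hypercube/composition~\eqref{eq:multiple_mle}--\eqref{eq:confidence_multiple} in place of~\eqref{eq:alpha_cp2}. The one point you flag---that the monotonicity comparison with the fixed cube $Q^\star$ tacitly requires the algorithm to select a sufficiently large hypercube---is exactly the informality the paper itself leaves open in its discussion of the ``preferably as large as possible'' heuristic, so your level of rigor matches the paper's.
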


\begin{algorithm}[!t]
\caption{SMC of $(\sys, \sysState) \models \Ps \in D$.}
\label{alg:joint}
\begin{algorithmic}[1]
\Require PUS $\sys$, desired significance level $\alpha_d$, batch size $B$.

\State $N_1, \ldots, N_n \gets 0$, $\yw{\bar{\alpha}}_\mathrm{CP} \gets 1$

\For{$i \in [n]$}
\State $K_i \gets \abs{\pvs_i}$, $T_i \gets 0$
\EndFor

\While{$\yw{\bar{\alpha}}_\mathrm{CP} < 1- \alpha_d$}

\For{$i \in [n]$}

\State Draw $\sig_{N_i+1}, \ldots \sig_{N_i+B}$ from $\sysState$ in $\sys$.

\State $T_i \gets T_i + \sum_{j = N_i + 1}^{N_i + B} \varphi_i (\sig^{(j)}_{1}, \ldots \sig^{(j)}_{K_i})$; $N_i \gets N_i + B$.

\EndFor

\State Update $\assert$ by~\eqref{eq:assert_join} and $\yw{\bar{\alpha}}_\mathrm{CP}$ by~\eqref{eq:confidence_multiple}.

\EndWhile

\State \Return $\assert$  and $\yw{\bar{\alpha}}_\mathrm{CP}$.

\end{algorithmic}
\end{algorithm}

\subsection{SMC of Nested Probability Operators} \label{sub:nested}

Finally, we consider SMC of nested \hpstl formulas on the PUS $\sys$. By the syntax of \hpstl in \cref{sub:syntax}, a nested \hpstl formula is constructed iteratively in two ways: (i) replacing an atomic proposition with a general state formula, and (ii) consecutively nested quantification of free path variables in a non-nested formula.
The former also appears in common temporal logics and the latter is unique to \hpstl.

For (i), we show the idea of SMC by checking the satisfaction on a state $\sysState$ of the nested formula
$\Psi = \P^{\pvs} \psi [\rho] \in D_1$ derived by replacing an atomic proposition of $\psi$ with a non-nested state formula $\rho = \Ps \in D_2$.
In $\Psi$, we have $D_1 \subseteq [0,1]$, $D_2 \subseteq [0,1]^n$.
If $\rho$ is treated as an atomic proposition, $\Psi$ becomes a non-nested state \hpstl formula as discussed in \cref{sub:pvalue}.
The \hpstl state formulas nesting more probability operators in this fashion can be statistically verified in the same way.

To verify $\Psi$, we follow a compositional analysis similar to~\cite{sen_StatisticalModelChecking_2005a,sen_VESTAStatisticalModelchecker_2005a}.
If the state space $\sysStates$ of the PUS is finite (e.g., a CTMC from \cref{sub:ctmc}), we can statistically verify the sub-formula $\rho$ on each state $\sysState$ of the PUS $\sys$ with significance level $\alpha_\sysState$, using \cref{thm:joint} and \cref{alg:joint}, and label the state with $\rho$ if the assertion given by \cref{alg:joint} is $\assert \big( V \models \rho \big) = 1$.
Then, we can statistically verify the full formula $\Psi$ on the relabeled PUS as a non-nested formula with significance level $\alpha_0$ using \cref{thm:joint}.
The overall significance level is $\alpha = \sum_{\sysState \in \sysStates} \alpha_\sysState + \alpha_0$, which is only bounded when $\sysStates$ is finite.
The SMC for $\Psi$ on infinite-state PUS provides an avenue for future work.
For this work, it brings no limitation as we only verify such nested formulas on finite-state PUS (e.g., queueing in \cref{sec:evaluation}), while for \phioa~such formulas are not used to capture properties~of~interest.

To implement the SMC algorithm for $\Psi$ on finite-state PUS, given the overall significance level $\alpha$, we can split it into the summation $\sum_{\sysState \in \sysStates} \alpha_\sysState + \alpha_0$.
The simplest way is $\alpha_\sysState = \alpha_0 = \alpha/(\abs{\sysStates} + 1)$ for $\sysState \in \sysStates$, where $\sysStates$ is the number of states of the PUS $\sys$.
Then, we can employ \cref{alg:joint} to verify $\rho$ on each state $\sysState$ with significance level $\alpha_\sysState$ and then assert $\Psi$ with significance level $\alpha_0$ using \cref{alg:joint} again.
This is summarized by \cref{alg:nested1}.

\begin{algorithm}[!t]
\caption{SMC of $(\sys, \sysState) \models \P^{\pvs} \psi [\rho] \in D_1$ with $\rho = \Ps \in D_2$.}
\label{alg:nested1}
\begin{algorithmic}[1]
\Require PUS $\sys$, desired significance level $\alpha_d$.

\State Split $\alpha_d$ into $\sum_{\sysState \in \sysStates} \alpha_\sysState + \alpha_0$

\For{$\sysState \in \sysStates$}

\State Verify $(\sys, \sysState) \models \rho$ on $\sys$ by \cref{alg:joint} with significance level $\alpha_\sysState$ and label $\sysState$ with $\rho$ if the assertion is positive.

\EndFor

\State Verify $(\sys, \sysState) \models \P^{\pvs} \psi [\rho] \in D_1 $ on the relabeled $\sys$ by \cref{alg:joint} with significance level $\alpha_0$.
\end{algorithmic}
\end{algorithm}

For (ii), we show the idea of SMC by checking the satisfaction on a state $\sysState$ of a \hpstl formula $\Psi = \P^{\pvs_1} (\P^{\pvs_2} \varphi < p_2) < p_1$, where the sub-formula $\varphi$ contains no probability operator and all its path variables are probabilistically quantified by $\P^{\pvs_1} \P^{\pvs_2}$,
i.e., $p_1, p_2 \in [0,1]$, $\pvs_1 \cap \pvs_2 = \emptyset$, and $\fv(\varphi) = \pvs_1 \cup \pvs_2$.

The SMC can be similarly done for \hpstl state formulas that nest more probability operators in this fashion.
The formula $\Psi$ says that with probability at most $p_1$, we can find a set of paths $\pvs_1$ such that the probability to find another set of paths $\pvs_2$ to satisfy $\varphi$ is at most $p_2$.
This formula can be equivalently expressed by $\Psi = \P^{\pvs_1}  \big( \P^{\pvs_2} \in [0, p_2] \big) \in [0, p_1]$ using the  rule~\eqref{eq:new_rule}.
We note that for (ii),  unlike (i), the state space $\sysStates$ of the PUS can be infinite.

The main idea is as follows.
For $i \in [N]$, let $\Sig_i$ be a $\abs{\pvs_1}$-tuple of i.i.d. sample paths starting from $\sysState$ in the PUS; define the indicator
\begin{equation} \label{eq:partial}
	T_i = \id \Big( (\sys, \sysState) \models \llbracket \P^{\pvs_2} \varphi < p_2 \rrbracket_{V [\pvs_1 \rightarrow \Sig_i]} \Big)
\end{equation}
of whether the partly instantiated formula $(\sys, \sysState) \models \llbracket \P^{\pvs_2} \varphi < p_2 \rrbracket_{V[\pvs_1 \rightarrow \Sig_i]}$ is true under this instantiating.
Although $T_i$ is not directly accessible, we can estimate it statistically using the assertion $A_i$ of \cref{alg:joint} for any given significance level $\alpha_1 > 0$.

To verify the full formula $\Psi$, we only need to estimate the total number of positive instantiation $T = \sum_{i \in [N]} T_i$.
We estimate it using $A = \sum_{i \in [N]} A_i$.
Since $A_i \neq T_i$ with probability at most $\alpha_1$ for all $i \in [N]$, we have $\abs{T-A} < \Delta$ with the significance level
\begin{equation} \label{eq:delta}
	\alpha_2 = 1 - F_{\mathrm{Binom}} (\Delta \, \vert \, N, \alpha_1),
\end{equation}
where $F_{\mathrm{Binom}}$ is the Binomial cumulative distribution function.
Since
\begin{equation} \label{eq:Ti}
	T \in [T_1, T_2] = [\min\{0, A - \Delta\}, \max\{A + \Delta, N\}],
\end{equation}
we can check the full formula $\Psi$ using these minimal and maximal estimations of $T$.
\yw{Intuitively, if $T_2/N < p$ (hence $T_1/N < p$), it is more likely that $\Psi$ is true; if $T_1/N > p$ (hence $T_2/N > p$), it is more likely that $\Psi$ is false; otherwise, further sampling is needed. 
Thus, we define the following statistical asserting function by}
\begin{equation} \label{eq:assert}
	\assert((\sys, \sysState) \models \Psi) = \begin{cases}
		1 &\text{ if } T_2/N  < p_1 \\
		0 &\text{ if } T_1/N  > p_1 \\
		\text{undecided}, &\text{ otherwise.}
	\end{cases}
\end{equation}
When a final assertion is possible, its significance level is the larger one between plugging $T_1$ and $T_2$ into~\eqref{eq:alpha_cp} and~\eqref{eq:alpha_cp2}.
Accordingly, the overall significance level $\alpha$ is
\begin{equation} \label{eq:alpha}
	\alpha = \alpha_2 +
	\begin{cases}
		\alpha_\mathrm{CP} (0, p_1 \, \vert \, T_2, N) &\text{ if } 
T_2/N  < p_1 \\
		\alpha_\mathrm{CP} (p_1, 1 \, \vert \, T_1, N) &\text{ if } 
T_1/N  > p_1
	\end{cases}
\end{equation}
where $\alpha_\mathrm{CP}$ is given by~\eqref{eq:alpha_cp}.

To implement the SMC algorithm, given the overall significance level $\alpha$, we need to simultaneously decrease both the significance level $\alpha_1$ for making assertions on the partly instantiation of $\psi$ for given values of $\pvs_1$, and the significance level $\alpha_2$ for estimating the sum of those assertions.
We start from $\Delta = c \alpha_1 N$ with $c = 1$ and $\alpha_1 = \alpha_d$.
When $\alpha_2$ is the main source of statistical error, i.e.,  $\alpha_2 > \alpha/2$, we decrease $\alpha_2$ by increasing the parameter $c$ by $1$;
otherwise, we decrease $\alpha_1$ by reducing it by half.
This is summarized in \cref{alg:nested2}.

\begin{algorithm}[!t]
\caption{SMC of $\P^{\pvs_1} (\P^{\pvs_2} \varphi < p_2) < p_1$.}
\label{alg:nested2}
\begin{algorithmic}[1]
\Require PUS $\sys$, desired significance level $\alpha_d$.

\State Set initial significance levels $\alpha_1$ for $i \in [N]$.

\State $T \gets 0$, $N \gets 0$, $T_i \gets 0$ for $i \in [N]$.

\State $c \gets 1$, $\alpha \gets 1$, $\alpha_1 \gets \alpha_d$.

\While{$\alpha > \alpha_d$}

\State $N \gets N+1$.
Draw $\Sig_{N+1}$ staring from $\sysState$ in $\sys$.

\For{$i \in [N]$}

\State Update $A_i$ with significance level $\alpha_1$ by \cref{alg:joint}.

\EndFor

\State $A \gets \sum_{i \in N} A_i$ , $\Delta \gets c \alpha_1 N$.

\State Update $\alpha_2$ by~\eqref{eq:delta}, $T_1, T_2$ by~\eqref{eq:Ti} and $\alpha$ by~\eqref{eq:alpha}.

\If{$\alpha_2 > \alpha/2$}
$c \gets c + 1$,

\Else
$~~\alpha_1 \gets \alpha_1/2$.

\EndIf

\EndWhile

\State \Return Assertion given by~\eqref{eq:assert}.

\end{algorithmic}
\end{algorithm}

\section{Evaluation} \label{sec:evaluation}

We numerically evaluate our SMC algorithms on several benchmarks with
different complexity levels.
Other probabilistic hyperproperties on different systems are handled in a
similar manner, but due to space constraints here we focus on the discussed 
properties/systems.
{All benchmarks are implemented in Matlab/Simulink and are available
in~\cite{simulink}.
Specifically, the Toyota powertrain model is derived 
from~\cite{jin_BenchmarksModelTransformations_2014}; and the queueing networks 
are implemented in Simulink using the SimEvents Toolbox~\cite{simevents}.}
Evaluations are performed on a laptop with 16~GB~RAM and Intel Xeon E-2176 CPU.

\yw{For each benchmark, we evaluate the proposed SMC algorithms in different setups by changing the desired significance level $\alpha$, as well as the parameters (e.g., $\delta$, $\varepsilon$, and $t$) in the objective \hpstl specifications.
The proposed SMC algorithms are executed repeately on each setup for $100$ times.
This is to check whether the probability for the proposed SMC algorithms to return the correct assertion is at least $1 - \alpha$: i.e., we repeat the SMC algorithm for each setup for $100$ times, and check if it makes the correct assertion for at least $100 (1 - \alpha_d)$ times.

For each setup, to compare it with the assertions of the proposed SMC algorithms, the truth value of the \hpstl specification of interest is derived by estimating the probabilities involved in it using an analytic solution or numerous sampling.
Specifically, for the thermostat, we derive an analytic solution for the left hand side of~\eqref{eq:spec_tp} from its dynamics; this can be  done due to its simplicity. 
For the powertrain, we estimate the left hand side of~\eqref{eq:spec_tp} by sampling $10^5$ pairs of $(\pv_1, \pv_2)$, for which the standard error is less than $0.01$.
For the queueing networks, we estimate the left hand side of~\eqref{eq:spec_qn} by drawing $500$ samples for $\pv_1$.
For each sample of $\pv_1$, we draw $500$ samples for $\pv_2$ to evaluate the truth value of $\big\vert \pr_{\pv_2} (\tau_i^{\pv_1} - \tau_j^{\pv_2} > t) - \pr_{\pv_2} (\tau_j^{\pv_2} - \tau_i^{\pv_1} > t) \big\vert < \delta$.
The total standard error for estimating the left hand side of~\eqref{eq:spec_qn} is less than $0.05$.}

Results for all considered setups are shown
in~\cref{tb:thermostat,tb:tp,tb:qn,tb:qn2}; as can be seen, the estimated
accuracy of our SMC algorithms are very close to~$1$, showing the
conservativeness of the CP significance level.
We also report the average number of sample paths and the execution times for each setup based on the $100$ repetitions, rounded up to their standard errors.
In~all~simulations, the number increases when the desired significance level decreases, showing the trade-off between accuracy and the sampling~cost.
\yw{In addition, the execution time is mainly consumed by drawing samples from the Simulink models, and is approximately propositional to the number of samples.}

\subsection{Thermostat} 
A thermostat can be modeled as a simple \phioa{} with
two modes $\mathsf{Heat}$ and $\mathsf{Cool}$ (\cref{fig:thermostat}),
and one state variable $T$
that varies \yw{within the temperature interval}
$[T_l, T_h] = [15, 40] \subseteq \real$.
The mean heating and cooling rates are
$c_1 = c_2 = 5$;
they are subject to time-invariant but random Gaussian error
$n_1, n_2 \sim \mathbf{N} (0, 0.5^2)$.
The thermostat starts from $\big( T = T_l, \mathsf{Heat} \big)$.
We verify the sensitivity of the running period of a heat and cool cycle
under the noise,
which is represented in \hpstl by~\eqref{eq:spec_tp}
with $\apred := \big( T = T_l, \mathsf{Cool} \big)$.
We statistically verified the sensitivity specification using~\cref{alg:joint}, with specification~\eqref{eq:spec_tp} parameters $\delta \in \{0.9, 1.1\}$ and $\varepsilon\in\{0.05, 0.01\}$, under the desired significance levels $\alpha \in \{0.01,0.05\}$.

The derived results in \cref{tb:thermostat} give accurate estimations (with significance level as low as $\alpha = 0.01$) on the probability distribution of the sensitivity, with a relatively small number of samples (at~most a few hundred samples for each setup).
We verified that the sensitivity of the running period of the thermostat is less than $\delta=1.1$ with probability $1 - \varepsilon = 0.95$, but not less than $\delta=0.9$ with the same probability, showing that the $\mathbf{0.95}$ percentile is between $\mathbf{[0.9, 1.1]}$.
Also, as the sensitivity specification is false for $1 - \varepsilon = 0.99$ for both $\delta = 0.9$ and $\delta = 1.1$, showing that the $\mathbf{0.99}$ percentile is in $\mathbf{[1.1, \infty]}$.

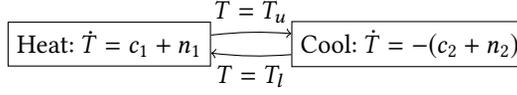
\begin{figure}
\centering
\begin{tikzpicture}
\node[draw, minimum width=1.3cm, align=center] (l) {Heat: $\dot{T} = c_1 + n_1$};
\node[draw, minimum width=1.3cm, right of=l, xshift=3cm, align=center] (u) {Cool: $\dot{T} = - (c_2 + n_2)$};
\draw[->] (l) to[bend left=5] node[above] {$T=T_u$} (u);
\draw[->] (u) to[bend left=5] node[below] {$T=T_l$} (l);
\end{tikzpicture}
\caption{Dynamical Model of a Thermostat.\label{fig:thermostat}}
\end{figure}

\begin{table}[!t]
\centering
\setlength\tabcolsep{2mm}
\begin{tabular}{rrrrrrr}
$\delta$ & $\varepsilon$ & $\alpha$ & Acc. & Sam. & Time (s) & Ans.  \\
\toprule
 0.9 &  0.05 &  0.05 &  1.00 &  1.8e+02 & \yw{2.1e+02} & \False \\
 0.9 &  0.05 &  0.01 &  1.00 &  5.0e+02 & \yw{6.0e+02} & \False \\
 0.9 &  0.01 &  0.05 &  1.00 &  2.8e+01 & \yw{3.4e+01} & \False \\
 0.9 &  0.01 &  0.01 &  1.00 &  4.6e+01 & \yw{5.5e+01} & \False \\
 1.1 &  0.05 &  0.05 &  0.99 &  3.0e+02 & \yw{3.5e+02} & \True \\
 1.1 &  0.05 &  0.01 &  0.99 &  6.1e+02 & \yw{7.3e+02} & \True  \\
 1.1 &  0.01 &  0.05 &  1.00 &  1.3e+02 & \yw{1.6e+02} & \False \\
 1.1 &  0.01 &  0.01 &  1.00 &  2.2e+02 & \yw{2.6e+02} & \False \\
\bottomrule
\end{tabular}
\caption{Accuracy (Acc.), average number of samples (Sam.), \yw{average execution time (Time)}, and SMC results (Ans.) for sensitivity~\eqref{eq:spec_tp} with parameters $\delta$ and $\varepsilon$ of Thermostat under significance level $\alpha$ (note the $1 - \alpha$ accuracy guarantee by our SMC method).\label{tb:thermostat}}\end{table}

\subsection{Toyota Powertrain Control System}
We use the Simulink~model for the Toyota Powertrain with a four-mode embedded controller from~\cite{jin_PowertrainControlVerification_2014}.
It can be considered as a \phioa~ \allowbreak with four modes
and $15$ state variables.
We consider the sensitivity of the recovery time after start
of the deviation percentage of the air/fuel (A/F) ratio $\mu$
to the level $\abs{\mu} < 0.05$,
under the mean RPM $2500$ subjecting to Gaussian noise $\mathbf{N} (0,25^2)$.
The sensitivity specification is formally expressed by \hpstl formula~\eqref{eq:spec_tp}
with $\apred = \big( \abs{\mu} < 0.05 \big)$; we
 statistically verified it using~\cref{alg:joint}
with parameters $\varepsilon\in\{0.05, 0.01\}$ and $\delta \in \{0.06,0.07\}$, under the desired significance levels $\alpha \in \{0.01,0.05\}$.

The results shown in \cref{tb:tp} give accurate estimations (with significance level as low as $0.01$) on the probability distribution of the sensitivity under the given embedded controller; this is achieved with a relatively small number of samples (at most a few hundred samples for each setup).
We verified that the sensitivity of the recovery time of the powertrain is less than $\delta=0.15s$ with probability $1 - \varepsilon = 0.99$, but not less than $\delta=0.20s$ with the same probability, showing that the $\mathbf{0.99}$ percentile is between $\mathbf{[0.15s, 0.20s]}$.
Also, as the sensitivity specification is true for $1 - \varepsilon = 0.95$ for both $\delta = 0.15s$ and $\delta = 0.20s$, showing that the  $\mathbf{0.95}$ percentile is in $\mathbf{[0, 0.15s]}$.

\begin{table}[!t]
\centering
\setlength\tabcolsep{2mm}
\begin{tabular}{rrrrrrr}
$\delta$ & $\varepsilon$ & $\alpha$ & Acc. & Sam. & Time (s) & Ans. \\
\toprule
 0.15 &  0.95 &  0.05 &  1.00 &  5.9e+01 & \yw{8.1e+00} & \True \\
 0.15 &  0.95 &  0.01 &  1.00 &    9.0e+01 & \yw{1.3e+01} & \True \\
 0.15 &  0.99 &  0.05 &  0.99 &  6.6e+01 & \yw{9.1e+00} & \False \\
 0.15 &  0.99 &  0.01 &  1.00 &  9.7e+01 & \yw{1.4e+01} & \False \\
 0.20 &  0.95 &  0.05 &  0.98 &  5.9e+01 & \yw{8.1e+00} & \True \\
 0.20 &  0.95 &  0.01 &  1.00 &    9.0e+01 & \yw{1.2e+01} & \True \\
 0.20 &  0.99 &  0.05 &  1.00 &    3.0e+02 & \yw{4.2e+01} & \True \\
 0.20 &  0.99 &  0.01 &  0.99 &  4.6e+02 & \yw{1.8e+02} & \True \\
\bottomrule
\end{tabular}
\caption{Accuracy (Acc.), average number of samples (Sam.), \yw{average execution time (Time)}, and SMC results (Ans.) for sensitivity~\eqref{eq:spec_tp} with parameters $\delta$ and $\varepsilon$ of Toyota Powertrain under significance level $\alpha$ (note the $1 - \alpha$ accuracy guarantee by our SMC method).\label{tb:tp}}\end{table}

\subsection{Queueing Networks} We consider two queuing networks \yw{with different sizes $m$ and $n$}, as shown in \cref{fig:queueing_network}:
the {\bf Small} has $1$ front-end servers and $2$ back-end servers,
the {\bf Large} has $25$ front-end servers and $20$ back-end servers.
For the small model, the package arrival and processing are modeled by exponential distribution, while in the second model, the package arrival and processing are modeled by Markov Modulated Poisson Processes 
with different parameters over all queues.
Note that the parameters for different front-end  servers are not identical,
so deriving an exhaustive solution is virtually impossible for large sizes.
We consider the fairness of workloads in queuing networks under the policy
that the front servers deliver to the back server with the shortest queue.
The fairness specification is formally defined in \hpstl by~\eqref{eq:spec_qn} with $\apred_i$ representing the overloading of the back server~$i$.
We set $i = 1$, $j = 2$, and statistically verify the specification with~\cref{alg:nested2} with $t \in \{0.1, 5.0\}$, $\delta \in \{0.1, 0.5\}$ and $\varepsilon \in \{0.1, 0.5\}$, for the significance level $\alpha = 0.95$.

{From the first row of \cref{tb:qn,tb:qn2}, for both the small and large queueing network, with probability $1 - \varepsilon = 0.9$ for the back server $1$, we have that $\abs{p_1 - p_2} < 0.1$, where $p_1$ and $p_2$ are the probabilities that  back server $2$ overloads $t = 5$ earlier/later than back server $1$, respectively.
This shows that for $90\%$ cases, the shortest-queue-delivery policy is roughly fair for back-end server $1$, in the sense~that its overload time is not significantly sooner or later (for $t = 5$)~than back-end server $2$.
On the other hand, from the third row of \cref{tb:qn,tb:qn2}, with probability $1 - \varepsilon = 0.9$ for back-end server $1$, we have $\abs{p_1' - p_2'} > 0.5$, where $p_1'$ and $p_2'$ are the probabilities that  back server $2$ overloads $t = 0.1$ earlier/later than  back server $1$, respectively.
Thus, for $90\%$ cases, the shortest-queue-delivery policy is not exactly fair for back server $1$, in the sense that its overload time can be moderately sooner or later (for $t = 0.1$) than for back-end server $2$.}

\begin{table}[!t]
\centering
\setlength\tabcolsep{2mm}
\begin{tabular}{rrrrrrrr}
$t$ & $\delta$ & $\varepsilon$ & $\alpha$ & Acc. & Sam. & Time (s) & Ans. \\
\toprule
 0.1 &  0.1 &  0.1 &  0.05 &  1.00 &  1.5e+02 & \yw{8.0e+01} & \False \\
 0.1 &  0.5 &  0.5 &  0.05 &  1.00 &  1.4e+02 & \yw{8.2e+01} & \False \\
 5.0 &  0.1 &  0.1 &  0.05 &  1.00 &  7.3e+02 & \yw{3.9e+02} & \True \\
 5.0 &  0.5 &  0.5 &  0.05 &  1.00 &  3.1e+01 & \yw{1.9e+01} & \True \\
\bottomrule
\end{tabular}
\caption{Accuracy (Acc.), average number of samples (Sam.), \yw{average execution time (Time)}, and SMC results (Ans.) for fairness~\eqref{eq:spec_qn} with parameters $t$, $\delta$ and $\varepsilon$ of a Small Queueing Network (with 1 front-end server and 2 back-end servers) under significance level $\alpha$.\label{tb:qn}}\end{table}

\begin{table}[!t]
\centering
\setlength\tabcolsep{2mm}
\begin{tabular}{rrrrrrrr}
$t$ & $\delta$ & $\varepsilon$ & $\alpha$ & Acc. & Sam. & Time (s) & Ans. \\
\toprule
 0.1 &  0.1 &  0.1 &  0.05 &  1.00 &  2.1e+02 & \yw{3.2e+02} & \False \\
 0.1 &  0.5 &  0.5 &  0.05 &  1.00 &  3.3e+02 & \yw{4.9e+02} & \False \\
 5.0 &  0.1 &  0.1 &  0.05 &  1.00 &  6.8e+02 & \yw{1.1e+03} & \True \\
 5.0 &  0.5 &  0.5 &  0.05 &  1.00 &  4.2e+01 & \yw{6.6e+01} & \True \\
\bottomrule
\end{tabular}
\caption{Accuracy (Acc.), average number of samples (Sam.), \yw{average execution time (Time)}, and SMC results (Ans.) for fairness of a Large Queueing Network (with 25 front-end servers and 20 back-end servers) with parameters defined as in \cref{tb:qn}.\label{tb:qn2}}
\end{table}

\section{Conclusion} \label{sec:conc}

In this work, we have studied statistical verification of hyperproperties for cyber-physical systems (CPS).
We have first defined a general model of probabilistic uncertain systems (PUS) that
unify commonly studied modeling formalisms such as continuous-time Markov
chains and hybrid I/O automata with probabilistic dynamical parameters.
We have then introduced hyperproperties, such as fairness and sensitivity, that
involve relationships between multiple paths simultaneously in continuous time.
To formally specify such hyperproperties, we have introduced Hyper Probabilistic
Signal Temporal Logic (\hpstl), which is a hyper and probabilistic version of
the conventional the signal temporal logic (STL).
To simplify our  presentation of the \hpstl syntax and semantics, while considering SMC of specific hyperproperties of complex CPS, in the proposed \hpstl, we have disallowed nested existential and universal quantifies on states to avoid exhaustive iteration on the possibly infinite state space. Still, this logic can be augmented (as done in detail in~\cite{wang2019statistical}) by allowing nested existential and universal quantifications over multiple states.

To verify \hpstl specifications on the PUS, we have developed statistical model 
checking (SMC) algorithms with three new features:
(1)~the significance level of the \hpstl specifications are computed directly 
using the Clopper-Pearson significance level;
(2)~statistically verifying \hpstl specifications on the joint probabilistic 
distribution of multiple paths,
and (3)~\hpstl specifications with nested probabilistic operators quantifying 
different paths are allowed.
Finally, we have evaluated the introduced SMC algorithms on different CPS 
benchmarks with varying levels of complexity.

The work in this paper opens many new avenues for further research. Our work 
has direct application in software doping~\cite{dbbfh17} and in particular, 
identifying whether two black-box systems are statistically identical. 
This problem is in particular challenging in the CPS domain, since the physical environment may make the behavior of 
cyber components more unpredictable. Another application area of our work is in 
conformance-based testing~\cite{amf14}. We also plan to expand our work to 
analyzing information-flow security and in particular differential~privacy.

\begin{acks}
This work is sponsored in part by the ONR under agreements N00014-17-1-2504, AFOSR under award number FA9550-19-1-0169, as well as the NSF CNS-1652544 and NSF SaTC-1813388 grants.
\end{acks}

\bibliographystyle{ACM-Reference-Format}

\end{document}